 \newtheorem{theorem}{Theorem}[section]
 \newtheorem{corollary}[theorem]{Corollary}
 \newtheorem{lemma}[theorem]{Lemma}
 \newtheorem{conjecture}[theorem]{Conjecture}
 \theoremstyle{definition}
 \newtheorem{definition}{Definition}
\newcommand{\poly}{\text{poly}}
\newcommand{\dist}{d}
\title{Partially Optimal Edge Fault-Tolerant Spanners}
\author{Greg Bodwin\\University of Michigan\\ \texttt{bodwin@umich.edu} \and Michael Dinitz\thanks{Supported in part by NSF award CCF-1909111.}\\Johns Hopkins University\\ \texttt{mdinitz@cs.jhu.edu} \and Caleb Robelle\\UMBC\\ \texttt{carobel1@umbc.edu}}
\begin{document}
\maketitle

\begin{abstract}
    Recent work has established that, for every positive integer $k$, every $n$-node graph has a $(2k-1)$-spanner on $O(f^{1-1/k} n^{1+1/k})$ edges that is resilient to $f$ edge or vertex faults.  For \emph{vertex} faults, this bound is tight.  However, the case of \emph{edge} faults is not as well understood: the best known lower bound for general $k$ is $\Omega(f^{\frac12 - \frac{1}{2k}} n^{1+1/k} +fn)$.  Our main result is to nearly close this gap with an improved upper bound, thus separating the cases of edge and vertex faults.  For odd $k$, our new upper bound is $O_k(f^{\frac12 - \frac{1}{2k}} n^{1+1/k} + fn)$, which is tight up to hidden $\poly(k)$ factors.  For even $k$, our new upper bound is $O_k(f^{1/2} n^{1+1/k} +fn)$, which leaves a gap of $\poly(k) f^{1/(2k)}$.  Our proof is an analysis of the fault-tolerant greedy algorithm, which requires exponential time, but we also show that there is a polynomial-time algorithm which creates edge fault tolerant spanners that are larger only by factors of $k$.
\end{abstract}
\thispagestyle{empty}

\clearpage
\setcounter{page}{1}

\section{Introduction}


Let $G = (V, E)$ be a graph, possibly with edge lengths $w : E \rightarrow \mathbb{R}_{\geq 0}$. A $t$-spanner of $G$, for $t \geq 1$, is a subgraph $H = (V, E')$ that preserves all pairwise distances within a factor of $t$, i.e.,
$$d_{H}(u,v) \leq t \cdot d_G(u,v)$$
for all $u,v \in V$ (where $d_{X}$ denotes the shortest-path distance in a graph $X$).  Since $H$ is a subgraph of $G$ it is also true that $d_G(u,v) \leq d_{H}(u,v)$, and so distances in $H$ are the same as in $G$ up to a factor of $t$.  The distance preservation factor $t$ is called the \emph{stretch} of the spanner.  
Spanners were introduced by Peleg and Ullman~\cite{PelegU:89} and Peleg and Sch{\"{a}}ffer~\cite{PelegS:89}, and have a wide range of applications in routing \cite{PelegU:89-routing}, synchronizers \cite{awerbuch1990network}, broadcasting \cite{awerbuch1991cient,peleg2000distributed}, distance oracles \cite{thorup2005approximate}, graph sparsifiers \cite{kapralov2012spectral}, preconditioning of linear systems \cite{elkin2008lower}, etc. 

The most common objective in spanners research is to achieve the best possible existential size-stretch trade-off.
Most notably, a landmark result of Alth\"ofer et al.~\cite{AlthoferDDJS:93} analyzed the following simple and natural greedy algorithm: given an $n$-node graph $G$ and an integer $k \geq 1$, consider the edges of $G$ in non-decreasing order of their weight and add an edge $(u,v)$ to the current spanner $H$ if and only if $\dist_{H}(u,v)>(2k-1) w(u,v)$.
They proved that this algorithm produces $(2k-1)$-spanners of \emph{existentially optimal size}.
In particular, the spanner produced has size $O(n^{1+1/k})$, and assuming the well-known \emph{Erd\H{o}s girth conjecture}~\cite{erdHos1964extremal} there are graphs in which \emph{every} $(2k-1)$ spanner (and in fact every $2k$-spanner) has at least $\Omega(n^{1+1/k})$ edges.  

A crucial aspect of real-life systems that is not captured by the standard notion of spanners is the possibility of \emph{failure}.
If some edges (e.g., communication links) or vertices (e.g., computer processors) fail, what remains of the spanner might not still approximate the distances in what remains of the original graph.
This motivates the notion of \emph{fault tolerant} spanners: 
\begin{definition} [Fault Tolerant Spanners] \label{def:FT}
A subgraph $H$ is an $f$-edge fault tolerant ($f$-EFT) $t$-spanner of $G = (V, E)$ if
$$d_{H \setminus F}(u,v) \leq t \cdot d_{G \setminus F}(u,v)$$
for all $u,v \in V$ and $F \subseteq E$ with $|F| \leq f$.
\end{definition}

In other words, an $f$-EFT spanner contains a spanner of $G\setminus F$ for every set of $|F| \le f$ edges that could fail.  The definition for vertex fault tolerance (VFT) is equivalent, with the only change being that $F \subseteq V \setminus \{u,v\}$.

Fault tolerant spanners were originally introduced in the setting of geometric graphs
by Levcopoulos, Narasimhan, and Smid \cite{levcopoulos1998efficient} and have since been studied extensively in that setting~\cite{LNS98,lukovszki1999new,czumaj2004fault,NS07}. 
Chechik, Langberg, Peleg and Roditty \cite{ChechikLPR:10} were the first to study fault-tolerant spanners in general graphs, giving a construction of an $f$-VFT $(2k-1)$-spanner of size $O(f^2 k^{f+1} \cdot n^{1+1/k}\log^{1-1/k}n)$ and an
$f$-EFT $(2k-1)$-spanner of size $O(f\cdot n^{1+1/k})$.  So they showed that introducing tolerance to $f$ edge faults costs us an extra factor of $f$ in the size of the spanner, while introducing tolerance to $f$ vertex faults costs us a factor of roughly $f^2 k^{f+1}$ in the size (compared to the size of a non-fault tolerant spanner of the same stretch).
Dinitz and Krauthgamer \cite{DinitzK:11} later improved the extra factor paid in the VFT setting to about $f^3$.
At this point, it appeared that the EFT setting might be substantially easier than the VFT setting, in the sense that it allowed for a smaller dependence on $f$ in spanner size.
However, a recent series of papers has developed a set of techniques that apply equally well to both settings, yielding the same improved bounds for each~\cite{BDPW18,BP19,DR20,BDR21}.
This has culminated in the following theorem:
\begin{theorem} [FT upper bounds \cite{BDR21}] \label{thm:ftub}
There is a polynomial time algorithm that, given any positive integers $f, k$ and any $n$-node graph $G$, constructs an $f$-EFT or VFT $(2k-1)$-spanner of $G$ with $O(f^{1-1/k} n^{1+1/k})$ edges.
\end{theorem}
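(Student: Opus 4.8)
The plan is to prove this in two stages — first an existential $O(f^{1-1/k}n^{1+1/k})$ bound by analyzing a fault-tolerant version of the greedy algorithm of \Althofer{} et al., and then a polynomial-time construction matching it up to constant factors — and I would begin with the former. Define the natural \emph{$f$-fault-tolerant greedy} algorithm: examine the edges of $G$ in non-decreasing order of weight, and add $(u,v)$ to the current subgraph $H$ precisely when there is a candidate fault set $F$ (at most $f$ edges in the EFT case, or at most $f$ vertices other than $u,v$ in the VFT case) with $(u,v)\notin F$ such that $\dist_{H\setminus F}(u,v) > (2k-1)\,w(u,v)$. Verifying that the output $H$ is a valid $f$-EFT (resp.\ VFT) $(2k-1)$-spanner is routine: for any admissible fault set $F$ and any pair $u,v$, walk along a shortest $u$--$v$ path in $G\setminus F$ and replace each surviving edge $e$ by the length-$\le(2k-1)w(e)$ detour in $H\setminus F$ guaranteed by the stopping rule; summing over the path gives total length $\le(2k-1)\,\dist_{G\setminus F}(u,v)$. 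So all of the work is in the size bound.

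\textbf{Bounding the size via blocking sets.} The heart of the argument is to show $|E(H)| = O(f^{1-1/k}n^{1+1/k})$, and here I would follow the ``blocking set'' method developed across this line of work. Each edge of $H$ entered with a witnessing fault set of size at most $f$, certifying that at insertion time every short $u$--$v$ detour passed through that set; the goal is to convert this per-edge data into a global statement of the form ``after deleting a suitably random, spread-out set of edges, the remainder has no short cycles in the weighted sense, hence at most $O(n^{1+1/k})$ edges by a Moore-bound-type argument.'' Concretely, I would sample a random fault set $F\subseteq E(H)$ (each edge included independently with a carefully tuned probability $p$, truncated so that $|F|\le f$ with good probability), argue that $H\setminus F$ inherits enough of the critical-everywhere structure to be $O(n^{1+1/k})$-sparse, note that an edge of $H$ survives with probability roughly $1-p$ so that $\E|E(H\setminus F)| \gtrsim (1-p)|E(H)|$, and solve the resulting inequality for $|E(H)|$ with the optimal choice of $p$. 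The main obstacle here is that the fault-tolerant greedy is \emph{not} monotone: an edge with no short detour at insertion time can acquire one later through heavier edges (of weight up to $(2k-1)w(u,v)$) that its original witness set does not touch, so the witness sets cannot be used verbatim, and one must carefully track how short cycles can be ``repaired'' by later insertions when arguing that the random deletion destroys all of them.

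\textbf{From exponential to polynomial time.} The greedy step is not obviously efficient: deciding whether a fault set of size $\le f$ makes $(u,v)$ critical is a length-bounded cut problem and is $\mathsf{NP}$-hard in general. So for the polynomial-time claim I would not run the greedy directly; instead I would build $H$ by combining a modest number of \emph{ordinary} (non-fault-tolerant) $(2k-1)$-spanners computed on cleverly chosen subgraphs of $G$ — for instance spanners of random edge- (resp.\ vertex-) subsamples, subsequently derandomized — and argue that for every admissible fault set $F$ at least one such subgraph both avoids $F$ and still certifies short replacement paths, so that the union is $f$-EFT (resp.\ VFT). A first pass of this kind yields size $\Oish(f)\cdot n^{1+1/k}$; getting down to the optimal $O(f^{1-1/k}n^{1+1/k})$ requires a more careful accounting of how edges are shared across the rounds, essentially re-deriving the blocking-set bound algorithmically. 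I expect this final optimization — matching the greedy bound in polynomial time rather than losing logarithmic or $f^{1/k}$ factors — to be the most delicate part of the whole proof.
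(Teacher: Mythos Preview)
This theorem is not proved in the paper; it is cited as prior work from \cite{BDR21}, and the paper's own contributions (Theorems~\ref{thm:main} and~\ref{thm:main-polytime}) \emph{improve} on it in the EFT setting. So there is no ``paper's own proof'' to compare against directly. That said, the paper describes the method behind this cited result in enough detail (Section~\ref{sec:overview} and Section~\ref{sec:greedy-blocking}) that your proposal can be assessed against it, and there are two substantive mismatches.

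First, your polynomial-time plan is the wrong one. Taking unions of ordinary $(2k-1)$-spanners on random subgraphs is the \emph{older} line of attack (going back to \cite{DinitzK:11}), which gives $\poly(f)$ overhead well above $f^{1-1/k}$ and does not, as far as is known, reach the bound in the theorem. The route that \cite{DR20,BDR21} actually take --- and that this paper reproduces as Algorithm~\ref{ALG:polytime} --- is to \emph{keep} the greedy framework but replace the NP-hard length-bounded-cut test with a polynomial-time $(2k-1)$-approximation $\mathcal{A}$; one then reruns the blocking-set analysis with the per-edge fault bound degraded from $f$ to $O(kf)$. Your sketch explicitly says ``I would not run the greedy directly,'' which is the opposite of what the cited proof does, and your closing sentence (``I expect this final optimization \dots\ to be the most delicate part'') is an honest admission that you do not have a mechanism for it.

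Second, the size-bound sketch has a confused step: you sample edges into a random set $F$ ``truncated so that $|F|\le f$ with good probability.'' There is no reason to cap the random deletion by the fault parameter $f$; the subsampling in the blocking-set argument is a density argument, not a fault-tolerance argument. The actual mechanism (as summarized in Section~\ref{sec:overview}) is: (1) the greedy output has a $2k$-blocking set $B$ with $|B|\le f\,|E(H)|$, meaning every cycle of length $\le 2k$ contains both edges of some block; (2) any graph admitting such a blocking set is structurally close to a high-girth graph on $n/f$ nodes, yielding $|E(H)| = O\bigl(f^2 \cdot \gamma(n/f,2k)\bigr) = O\bigl(f^{1-1/k} n^{1+1/k}\bigr)$. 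Your ``monotonicity obstacle'' is also a red herring: the blocking-set lemma (the analog of Lemma~\ref{lem:greedy-strong-blocking} here) handles later-added edges automatically, since for any short cycle one looks at the \emph{last} edge added and observes that its witness set must hit the rest of the cycle.
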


This upper bound on spanner size is notable because it fully matches the known lower bound for VFT spanners:
\begin{theorem} [VFT lower bounds \cite{BDPW18}] \label{thm:vftlb}
Assuming the girth conjecture~\cite{erdHos1964extremal}, for any positive integers $n, f, k$, there exist $n$-node graphs in which every $f$-VFT $(2k-1)$-spanner has $\Omega(f^{1-1/k} n^{1+1/k})$ edges.
\end{theorem}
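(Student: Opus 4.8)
The plan is to boost the classical girth-based spanner lower bound by means of a vertex \emph{blow-up}: we start from an extremal high-girth graph on about $n/f$ vertices, replace each vertex by a cluster of $\Theta(f)$ twins, and observe that an $f$-fault adversary is now strong enough to force \emph{every} edge of the blown-up graph into any $f$-VFT $(2k-1)$-spanner. Concretely, invoke the \Erdos{} girth conjecture to fix a graph $H_0$ on $n_0 := \Theta(n/f)$ vertices with girth larger than $2k$ and $\Omega(n_0^{1+1/k})$ edges; recall the standard consequence that deleting any edge $(u,v)$ of $H_0$ leaves every remaining $u$–$v$ path of length at least $2k$ (it would otherwise close a short cycle with $(u,v)$). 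Now build $G$ by replacing each vertex $v$ of $H_0$ with an independent set $S_v$ of $t := \lfloor f/2 \rfloor + 1$ copies and each edge $(u,v) \in E(H_0)$ with the complete bipartite graph between $S_u$ and $S_v$. Then $G$ has $n_0 t$ vertices — pad with isolated vertices, or tune $n_0$, to make this exactly $n$ — and $\Theta(t^2 n_0^{1+1/k})$ edges.

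The heart of the argument is to show that every $f$-VFT $(2k-1)$-spanner $S$ of $G$ contains \emph{all} of $E(G)$. Fix an arbitrary edge of $G$, say $(u',v')$ with $u' \in S_u$, $v' \in S_v$, and $(u,v) \in E(H_0)$. Let the adversary fail the vertex set $F := (S_u \cup S_v) \setminus \{u', v'\}$, which has size $2(t-1) \le f$ and excludes both endpoints. In $G \setminus F$ the direct edge $(u',v')$ still exists, so $d_{G \setminus F}(u',v') = 1$ and hence $d_{S \setminus F}(u',v') \le 2k-1$. Consider any $u'$–$v'$ path $P$ in $S \setminus F$ that avoids the edge $(u',v')$. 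Since $u'$ (resp.\ $v'$) is the unique survivor of $S_u$ (resp.\ $S_v$), every edge of $P$ runs between clusters $S_a, S_b$ with $ab \in E(H_0)$, and no edge of $P$ runs between a survivor of $S_u$ and a survivor of $S_v$; projecting $P$ to $H_0$ thus yields a $u$–$v$ walk that never uses the edge $(u,v)$, so by the girth property $P$ has length at least $2k > 2k-1$, a contradiction. Therefore $(u',v') \in E(S)$.

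Counting now finishes the proof: $|E(S)| \ge |E(G)| = \Theta(t^2 n_0^{1+1/k})$, and substituting $t = \Theta(f)$ and $n_0 = \Theta(n/f)$ gives $\Theta\!\big(f^2 (n/f)^{1+1/k}\big) = \Theta\!\big(f^{1-1/k} n^{1+1/k}\big)$. The girth conjecture is used only to supply $H_0$; everything else is unconditional, and the usual degenerate regimes (e.g.\ $f = \Omega(n)$, or the padding that makes $|V(G)|$ exactly $n$) change only the hidden constants.

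The main obstacle is the middle step — designing the fault set so that it is simultaneously within the budget of $f$ and surgical enough to kill every ``shortcut'' through the two blown-up clusters, leaving only the direct edge or an honest high-girth detour in $H_0$. The key point is to delete \emph{all but one} copy from each of $S_u$ and $S_v$, rather than, say, only the other neighbors of $u'$: otherwise a path could weave back and forth between the two clusters and reach $v'$ in very few hops, and no edge would be forced.
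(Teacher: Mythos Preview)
This theorem is quoted from \cite{BDPW18} as background and is not proved anywhere in the present paper, so there is no in-paper proof to compare against. Your argument is correct and is precisely the construction of \cite{BDPW18}: a vertex blow-up of a girth-conjecture graph on $\Theta(n/f)$ vertices, with the adversary deleting all copies in the two endpoint clusters except the two endpoints themselves, so that any surviving detour must project to a $u$--$v$ walk in $H_0$ avoiding $(u,v)$ and hence have length $\ge 2k$.

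For what it is worth, the paper does deploy the same blow-up idea in Appendix~\ref{app:lbs} to prove Theorem~\ref{thm:uncondeftlb}, but for a different statement: an \emph{unconditional} EFT lower bound at $k=7$, where the base graph is the Ree-Tits octagon incidence graph (girth $>15$, known unconditionally) rather than a girth-conjecture graph, the two sides are blown up by different factors to balance their sizes, and the fault set consists of the other $f-1$ \emph{edge} copies of $(u,v)$ rather than the other vertex copies in the two clusters. The skeleton of the argument --- blow up a high-girth graph, fault away everything that could shortcut between the two clusters, then project back to the base graph and invoke girth --- is identical to yours.
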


Thus the VFT setting is now essentially completely understood.
However, there are still significant gaps in the EFT setting.
While the VFT lower bound from \cite{BDPW18} also holds for the EFT setting when $k=2$, for general $k$ it drops off considerably.
(We note that the $fn$ term in the following theorem statement is not explicitly given in~\cite{BDPW18}, but is a simple folklore result stating that any $f$-regular graph is the unique $f$-EFT spanner of itself):
\begin{theorem} [EFT lower bounds \cite{BDPW18}] \label{thm:eftlb}
Assuming the girth conjecture \cite{erdHos1964extremal}, for any positive integers $n, f, k$, there exist $n$-node graphs in which every $f$-EFT $(2k-1)$-spanner $H$ with
$$
|E(H)| = \begin{cases}\Omega\left(f^{1/2} n^{3/2}\right) & k = 2\\
\Omega\left(f^{1/2-1/(2k)} n^{1+1/k} + fn\right) & k > 2\end{cases}
$$
\end{theorem}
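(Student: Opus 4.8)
The plan is to prove two lower bounds separately --- $\Omega(fn)$ and $\Omega(f^{1/2-1/(2k)}\,n^{1+1/k})$, the latter sharpened to $\Omega(f^{1/2}\,n^{3/2})$ when $k=2$ --- and then take whichever is larger. For the $\Omega(fn)$ term I would use the folklore fact referenced in the statement: take an $f$-regular $n$-node graph $G$ (near-regular if $fn$ is odd), which has $\Theta(fn)$ edges, and note that $G$ is its own unique $f$-EFT spanner of any finite stretch. Indeed, if a subgraph $H$ omits an edge $(u,v)$ of $G$, one can fault the at most $f-1$ other edges of $G$ at $u$; this isolates $u$ in $H$ while keeping $d_{G\setminus F}(u,v)=1$, so $H$ fails the spanner guarantee. (This term is vacuous once $f > n-1$; likewise I would only pursue the main term when $f = O(n^2)$, beyond which $f^{1/2-1/(2k)}n^{1+1/k}$ exceeds $\binom{n}{2}$.)

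For the main term with $k \ge 3$, I would invoke the girth conjecture to fix an $m$-node graph $\Gamma$ of girth at least $2k+1$ with $|E(\Gamma)| = \Omega(m^{1+1/k})$, and build $G$ as a blow-up of $\Gamma$: replace each vertex $x$ by an independent set $A_x$ of $g := \lfloor\sqrt{f+1}\rfloor$ vertices, and replace each edge $(x,y)$ of $\Gamma$ by the complete bipartite graph $B_{xy}$ between $A_x$ and $A_y$. Then $G$ has $n := gm$ vertices and $g^2 |E(\Gamma)| = \Omega(f m^{1+1/k})$ edges. The crux is a structural claim: for any edge $(x,y) \in E(\Gamma)$ and any $a \in A_x$, $b \in A_y$, every path in $G$ from $a$ to $b$ of length at most $2k-1$ must use some edge of $B_{xy}$. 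I would prove this by projecting such a path onto a walk $W$ in $\Gamma$ from $x$ to $y$ of the same length: since $|E(W)| \le 2k-1$ is strictly below the girth, the subgraph $(V(W), E(W))$ is a forest, so $x$ and $y$ are joined in it by a unique path; if that path were not the single edge $(x,y)$, then together with $(x,y)$ it would form a cycle of length at most $2k < 2k+1$ in $\Gamma$, a contradiction. Hence $W$ traverses $(x,y)$, so the original path traverses an edge of $B_{xy}$.

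Given the structural claim, the forcing argument is short: let $H$ be any $f$-EFT $(2k-1)$-spanner of $G$ and suppose some edge $(a,b)$ of some $B_{xy}$ is absent from $H$. Fault the set $F := E(H) \cap E(B_{xy})$ of all surviving $B_{xy}$-edges; since $(a,b) \notin E(H)$ we have $|F| \le g^2 - 1 \le f$, so $F$ is a legal fault set not containing $(a,b)$. Then $d_{G\setminus F}(a,b) = 1$, yet by the structural claim every $a$-to-$b$ path in $H$ of length at most $2k-1$ uses a $B_{xy}$-edge, all of which now lie in $F$; hence $d_{H\setminus F}(a,b) \ge 2k$, violating the stretch bound. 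So $H$ contains every $B_{xy}$, i.e.\ $H = G$, giving $|E(H)| = \Omega(f m^{1+1/k}) = \Omega\big(f (n/\sqrt{f})^{1+1/k}\big) = \Omega\big(f^{1/2-1/(2k)} n^{1+1/k}\big)$; a standard padding argument then removes the assumption that $n$ is a multiple of $m$.

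For $k=2$ this construction only yields $\Omega(f^{1/4}n^{3/2})$, so there I would instead invoke Theorem~\ref{thm:vftlb}: its girth-$5$ blow-up already forces $\Omega(f^{1/2} n^{3/2})$ edges against vertex faults, and at stretch $3$ one can re-run the argument above but fault only the $O(g)$ $B_{xy}$-edges incident to $a$ or $b$ (every short $a$-to-$b$ path must use one of these), so that the larger blow-up $g = \Theta(f)$ becomes affordable. The step I expect to be the real obstacle, and the reason the exponent is $f^{1/2 - 1/(2k)}$ rather than the $f^{1 - 1/k}$ of the vertex-fault bound, is exactly the price of faulting a whole biclique: it has $g^2$ edges, which pins $g$ at $\Theta(\sqrt f)$. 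It is tempting to fault far fewer biclique edges (as one can at $k=2$), but for $k \ge 3$ the ``wiggle'' paths of length $\ge 5$ that re-enter $A_x$ or $A_y$ through generic biclique edges block every such shortcut --- and, as the companion upper bound shows, this obstruction is essentially intrinsic.
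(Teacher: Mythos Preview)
The paper does not itself prove Theorem~\ref{thm:eftlb}; it is quoted from~\cite{BDPW18}, with the $+fn$ term noted in the surrounding text as folklore. So there is no proof in the paper to compare your proposal against.

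That said, your proposal is exactly the construction of~\cite{BDPW18}: blow up a (conjectured) girth-$(2k{+}1)$ graph $\Gamma$ by replacing vertices with independent sets of size $g$ and edges with $K_{g,g}$'s, then show the blow-up is its own unique $f$-EFT $(2k{-}1)$-spanner by faulting the whole biclique $B_{xy}$. Your structural claim (the projected walk lives in a forest on $\le 2k-1$ edges, so the unique $x$--$y$ path in it must be the edge $(x,y)$ itself, else a short cycle closes up through $(x,y)$), your forcing argument, and your arithmetic are all correct. You also correctly identify why $k=2$ is special: for a path of length $\le 3$, the projected walk $x\!-\!z\!-\!w\!-\!y$ has only three steps, so the forced occurrence of the edge $(x,y)$ must be the first or last step (the middle step would make $(x,z)$ or $(w,y)$ a self-loop), meaning the corresponding $B_{xy}$-edge in $G$ is incident to $a$ or to $b$. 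Hence only $O(g)$ edge faults are needed and one may take $g=\Theta(f)$, matching the VFT exponent. For $k\ge 3$ the length-$5$ ``wiggle'' path you describe genuinely forces faulting the full $K_{g,g}$, pinning $g$ at $\Theta(\sqrt f)$ and yielding $f^{1/2-1/(2k)}$.
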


Hence there is a substantial $\poly(f)$ gap in the EFT setting when $k > 2$.  
So the main open question left is how to close this gap: how much extra do we have to pay to achieve edge fault tolerance compared to a non-fault tolerant spanner?  Is the right answer $f^{\frac12 - \frac{1}{2k}}$, or $f^{1 - 1/k}$, or somewhere in between?

\begin{table*}[t]
\begin{center}

    \begin{tabular}{llcl}
    \toprule
        \textbf{Spanner size} & \textbf{Tight?} & \textbf{Polytime?} & \textbf{Citation} \\
    \midrule
        $O \left(f \cdot n^{1+1/k} \right)$ & & \checkmark{}  & \cite{ChechikLPR:10} \\
        $O \left( \exp(k) f^{1 - 1/k} \cdot n^{1+1/k} \right) $  & $k=2$ only & & \cite{BDPW18} \\
        $O \left( f^{1 - 1/k} \cdot n^{1+1/k} \right)$ & $k=2$ only & & \cite{BP19} \\
        $O \left( k f^{1 - 1/k} \cdot n^{1+1/k} \right)$ & $k=2$ only & \checkmark{} & \cite{DR20} \\
        $O \left( f^{1 - 1/k} \cdot n^{1+1/k} \right)$ & $k=2$ only & \checkmark{} & \cite{BDR21}\\
        $O \left( k^{2} f^{1/2} \cdot n^{1+1/k} + k f n \right)$ for even $k$ & $k=2$ only ($^*$) & & \textbf{this paper} \\
        $O\left( k^{2} f^{1/2 - 1/(2k)} \cdot n^{1+1/k} + k f n \right)$ for odd $k$ & all fixed odd $k$ & & \textbf{this paper}\\
        $O \left( k^{5/2} f^{1/2} \cdot n^{1+1/k} + k^2 f n \right)$ for even $k$ & $k=2$ only ($^*$) & \checkmark{} & \textbf{this paper} \\
        $O\left( k^{5/2} f^{1/2 - 1/(2k)} \cdot n^{1+1/k} + k^2 f n \right)$ for odd $k$ & all fixed odd $k$ & \checkmark{} & \textbf{this paper}\\
    \bottomrule
    \end{tabular}
    
    \caption{\label{tab:priorwork} Prior work on the size of $f$-EFT $2k-1$ spanners, and our new bounds in this paper.  Tightness is implied by a lower bound from \cite{BDPW18} of $\Omega(f^{1/2-1/(2k)} n^{1+1/k})$, except for $k=2$ where the lower bound improves to $\Omega(f^{1/2} n^{1+1/k})$.  Except for $k \in \{2, 3, 5\}$, and for $k=7$ and large enough $f$ (see Theorem \ref{thm:uncondeftlb} to follow), this lower bound and our claims of tightness are conditional on the girth conjecture \cite{erdHos1964extremal}.  The ($^*$) indicates that we conjecture our bound for even $k$ to be tight for all fixed even $k$.}
    \end{center}
\end{table*}

\subsection{Our Results}

In this paper we nearly resolve this question by improving the upper bound:
\begin{theorem} [Main Result] \label{thm:main}
For any positive integers $n, f, k$, every $n$-node graph has an $f$-EFT $(2k-1)$-spanner $H$ with
$$
|E(H)| = \begin{cases} O\left(k^2 f^{1/2 - 1/(2k)} n^{1+1/k} +kfn\right) & k \text{ is odd}\\
O\left(k^2 f^{1/2} n^{1+1/k} + kfn \right) & k \text{ is even.}
\end{cases}
$$
\end{theorem}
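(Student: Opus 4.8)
The plan is to analyze the fault-tolerant greedy algorithm: process the edges of $G$ in nondecreasing order of weight, and add edge $e = (u,v)$ to $H$ if and only if there is some fault set $F \subseteq E(H)$ with $|F| \le f$ and $e \notin F$ such that $d_{H \setminus F}(u,v) > (2k-1) w(e)$. By a standard argument this produces an $f$-EFT $(2k-1)$-spanner, so the entire content is the size bound. Following the now-standard framework (as in \cite{BP19,BDR21}), I would first reduce to the unweighted, fixed-weight-class setting by a bucketing argument that loses only a $\poly(k)$ or $\log$ factor, and then recall the combinatorial characterization of when the greedy algorithm includes an edge: if $e$ is added, then for every $F$ witnessing its inclusion, $H \setminus F$ contains no path of length $\le 2k-1$ from $u$ to $v$; equivalently, the set of edges added at a given scale forms a graph of high girth even after deleting the ``blocking set'' $F_e$ associated to each edge $e$. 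The core object is thus a graph $H$ together with, for each edge $e$, a set $B_e$ of at most $f$ other edges such that $H$ restricted to the short-path neighborhood of $e$ minus $B_e$ has girth $> 2k$.

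The key new idea needs to be why edge faults are cheaper than vertex faults by a $\sqrt{f}$ factor. Here I would set up a counting/charging argument on ``low-degree'' versus ``high-degree'' parts of $H$. Intuitively, an edge $e$ added late is blocked from forming a short cycle only because up to $f$ of its ``cheap alternatives'' are destroyed; but each such destroyed edge $b \in B_e$ is itself a real edge of $H$, and each edge of $H$ can only serve as a blocker for a bounded number of other edges before we derive a dense high-girth subgraph and contradict the Moore bound. The quantitative heart: partition $V$ (or the edge set) by degree, argue that the subgraph on vertices of degree $\le D$ has few edges by a direct Moore-bound argument on $H$ minus a union of blocker sets, while vertices of degree $> D$ are few in number, and the edges incident to them are controlled because a high-degree vertex forces many short paths, so $f$ faults cannot block them all unless $f$ is itself large. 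Optimizing the threshold $D$ against $n^{1+1/k}$, $fn$, and the Moore-type bound $n^{1+1/k}$ on girth-$(2k+1)$ graphs should give the $f^{1/2-1/(2k)}$ exponent for odd $k$. The parity of $k$ enters because the Moore bound / girth arguments behave differently for odd versus even girth $2k+1$ vs.\ $2k$: for odd $k$ one gets to use the full strength of the girth-$(2k+1)$ bound on the relevant auxiliary graph, whereas for even $k$ the relevant auxiliary girth drops by one, costing the extra $f^{1/(2k)}$ factor.

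I would then assemble the pieces: (i) the greedy algorithm yields a valid $f$-EFT $(2k-1)$-spanner; (ii) within each weight bucket, the added edges plus blocker structure force either a Moore-bound contradiction or a bound of the claimed form; (iii) sum over the $O(\log)$ or $\poly(k)$ buckets — and, crucially, absorb the bucketing loss into the $k^2$ and $kfn$ terms rather than a log factor, which likely requires the refined bucketing from prior work that avoids logs entirely. Finally, I would separately verify the additive $+ kfn$ term handles the regime where $f$ is large relative to $n^{1/k}$, corresponding to the folklore fact that a dense enough graph is its own unique EFT spanner, so the greedy output cannot beat $fn$ there but also does not need to.

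The main obstacle I anticipate is the degree-partitioning/charging step: making precise the claim that blocker sets of size $\le f$ cannot simultaneously destroy all short alternative paths around a high-degree vertex, and turning that into a clean bound with the right power of $f$. The subtlety is that blocker sets for different edges can overlap arbitrarily and adversarially, so a naive union bound over blockers is lossy; the $\sqrt{f}$ savings must come from a more global argument — perhaps a second moment / Cauchy--Schwarz step counting incidences between edges and the blocker sets they appear in, where the ``$\le f$'' constraint on each $B_e$ is played against the ``each edge blocks $\lesssim$ (something)'' constraint — and getting the exponents to land exactly at $1/2 - 1/(2k)$ rather than something weaker is where the real work lies. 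The parity split suggests the even case genuinely loses in this step, consistent with the paper leaving a $\poly(k) f^{1/(2k)}$ gap there.
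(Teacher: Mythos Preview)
Your proposal has a genuine gap at exactly the point you flag as ``the main obstacle.'' You are working with the ordinary blocking-set abstraction: each edge $e$ carries a set $B_e$ of $\le f$ other edges so that short cycles through $e$ are blocked. But the paper (citing the barrier result of~\cite{BP19}) observes that this abstraction \emph{cannot} yield the bound you want: there exist graphs with $2k$-blocking sets of size $\le f|E|$ and $\Omega(f^{1-1/k}n^{1+1/k})$ edges. So no amount of degree partitioning, Cauchy--Schwarz, or careful charging on top of the usual blocking-set structure will get you below $f^{1-1/k}$; the high/low degree split and second-moment sketch you outline, if it worked, would contradict that lower bound.

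What the paper actually does is change the combinatorial object. It introduces \emph{strong} blocking sets, where for every short cycle $C$ there must be a block $(e_1,e_2)$ with both edges in $C$ \emph{and one of them equal to the heaviest edge of $C$}. The greedy algorithm, because it processes edges in weight order, automatically produces a strong blocking set of size $\le f|E(H)|$. This extra ``heaviest edge'' information is then exploited by counting not all simple $k$-paths but only \emph{alternating} $k$-paths (every even-position edge heavier than its odd neighbors). A key lemma shows that for any node pair $s,t$ there is a set of $O(kf)$ edges such that every unblocked simple $s\leadsto t$ path of length $\le k$ uses one of them as its \emph{heaviest} edge. The dispersion bound then follows by recursing: split each alternating path at its heaviest edge, which is guaranteed to sit in an \emph{even} position, so an odd-length path splits into two shorter odd-length pieces while an even-length path splits into one odd and one even piece. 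This is precisely where the parity of $k$ enters---not through any difference in Moore/girth bounds for odd versus even girth, as you conjectured. The $\sqrt{f}$ savings comes from this recursion halving the path length at each step, not from a Cauchy--Schwarz incidence count. There is also no weight-bucketing; the weight order is used globally through the strong-blocking/alternating-path machinery, and the analysis works directly on the full output graph.
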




Hence Theorem~\ref{thm:main} \emph{entirely resolves} the question of edge fault tolerance for constant odd $k$, and is off from the lower bound (Theorem~\ref{thm:eftlb}) by only quadratic factors of $k$ for nonconstant odd $k$, and only $k^2 f^{1/(2k)}$ for even $k$.  See Table~\ref{tab:priorwork} for the full context of our results, and see Figure~\ref{fig:results} in Appendix \ref{app:figures} for a visualization.

While it is interesting and important to optimize the dependence on $k$ (and close the gap between our new upper bounds and the known lower bounds), we remark that spanners are never used with $k$ larger than $O(\log n)$, since the additional stretch no longer yields additional sparsity.  Hence the factor of $k^2$ in our bounds is at most polylogarithmic in $n$.  On the other hand, the fault parameter $f$ can be significantly larger (for example, polynomial in $n$).  Hence the more central question in the area is to optimize dependence on $f$, and we are more concerned with our gap of $f^{1/(2k)}$ for even $k$ than our gap of $k^2$.  However, we conjecture that our upper bound is actually correct, and that our current construction and analysis are tight with respect to $f$:  
\begin{conjecture}
For any positive integers $n, f$ and even integer $k$, there exist $n$-node graphs in which every $f$-EFT $(2k-1)$-spanner has $\Omega\left(f^{1/2} n^{1+1/k} \right)$ edges.
\end{conjecture}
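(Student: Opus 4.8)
The plan is to build on the known edge-fault lower bound (Theorem~\ref{thm:eftlb}) and strengthen it for even $k$ by a factor of $f^{1/(2k)}$. Recall the mechanism behind the $\Omega(f^{1/2-1/(2k)}n^{1+1/k})$ bound of \cite{BDPW18}: one starts with a graph $G_0$ on $N$ vertices with girth greater than $2k$ and $\Theta(N^{1+1/k})$ edges (this is where the girth conjecture~\cite{erdHos1964extremal} enters), blows up each vertex $v$ into an independent set $S_v$ of size $s=\Theta(\sqrt f)$, and replaces each edge $(u,v)$ by the complete bipartite graph $B_{(u,v)}$ between $S_u$ and $S_v$. The crucial observation is that because $G_0$ has girth greater than $2k$, any path of length at most $2k-1$ between a vertex of $S_u$ and a vertex of $S_v$ must traverse an edge of $B_{(u,v)}$ itself: a detour through the rest of $G_0$ would project to a $u$–$v$ walk of length at most $2k-1$, which together with the edge $(u,v)$ is a closed walk of length at most $2k$, shorter than the girth of $G_0$, hence null-homotopic and forced to use $(u,v)$. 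Consequently, if a purported $f$-EFT $(2k-1)$-spanner $H$ contained at most $f$ edges of $B_{(u,v)}$, an adversary deleting all of them would leave some edge $(u_i,v_j)$ of $G$ with no length-$(2k-1)$ replacement path in $H\setminus F$ even though $d_{G\setminus F}(u_i,v_j)=1$, a contradiction; so $H$ must keep at least $f+1$ edges per biclique, and summing over the $\Theta(N^{1+1/k})$ edges of $G_0$ gives the bound. Since $\lfloor k/2\rfloor/k$ equals $1/2-1/(2k)$ for odd $k$ and $1/2$ for even $k$, the conjecture is exactly the claim that for even $k$ one can force the spanner to be a factor $f^{1/(2k)}$ denser.

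The approach I would take is to replace the flat bicliques with a \emph{recursive} gadget, so that each biclique must not merely contain $f+1$ edges but must itself host a structure robust enough to route between its own sub-pairs under the remaining fault budget. Concretely, define $G_k$ by taking a graph of girth greater than $2k$ and replacing every edge with a suitably scaled copy of the hard instance $G_{k-1}$ for $f$-EFT $(2k-3)$-spanners, with base cases $G_1$ a single edge (a $1$-spanner must contain every edge) and $G_2$ the projective-plane-based instance underlying the $k=2$ case of Theorem~\ref{thm:eftlb}. Because the recursion decreases $k$ by two at each step, the even and odd families bottom out at different base cases, and this parity split is the conjectured origin of the $f^{1/(2k)}$ gap between the two bounds in Theorem~\ref{thm:main}. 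The scaling of each copy is chosen so that, unwinding the recursion, the total edge count comes out to $\Theta(f^{1/2}n^{1+1/k})$.

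The analysis would proceed by induction on $k$, with the engine being a \emph{confinement lemma}: because the outermost graph has girth greater than $2k$, a length-$\le(2k-1)$ replacement path around a missing edge must cross the corresponding top-level gadget, and on each side of that crossing it traces a null-homotopic closed walk confined to the gadgets incident to a single endpoint; recursively, such a walk must cross a sub-gadget, and so on, so an adversary can distribute its $f$ faults across the $\Theta(k)$ recursion levels to block all cheap detours. Summing the per-gadget contributions at every level and unwinding the recursion should yield $\Omega\!\left(f^{\lfloor k/2\rfloor/k}n^{1+1/k}+fn\right)$, with the additive $fn$ term coming from the outermost layer exactly as in the folklore argument.

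I expect the main obstacle to be making this recursion self-consistent. The fault budget $f$ is shared across all levels, so the adversary cannot spend $f$ fresh faults at every level; one must instead set up the inductive invariant so that each gadget effectively inherits a scaled-down fault parameter, or argue that even a small constant fraction of the budget per level already forces the desired density. Simultaneously, the confinement must be shown to survive all $\Theta(k)$ levels — excursions into neighboring or higher-level gadgets must not hand the spanner a cheaper routing option than the intended recursive one. This is precisely the step that is trivial when $k=2$ (there is no room to leave a single biclique within stretch $3$) but becomes delicate for larger even $k$, and is why the statement is currently only a conjecture. A secondary, more routine issue is to control the $\poly(k)$ loss incurred at each of the $\Theta(k)$ recursion levels so that the final bound matches the $O(k^2 f^{1/2} n^{1+1/k})$ upper bound of Theorem~\ref{thm:main} up to a fixed power of $k$.
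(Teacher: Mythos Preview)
The statement you are attempting is a \emph{Conjecture} in the paper, not a theorem: the authors explicitly do not prove it, and immediately after stating it they write that ``settling this conjecture \ldots\ is the main open question left by this work.''  There is therefore no paper proof to compare your attempt against.

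Your writeup is, as you yourself signal, a research plan rather than a proof.  You correctly reconstruct the mechanism behind the $\Omega(f^{1/2-1/(2k)}n^{1+1/k})$ bound of \cite{BDPW18}, and the recursive-gadget idea is a natural thing to try.  You also correctly isolate the two real obstructions: the fault budget does not refresh across recursion levels, and the confinement of short replacement paths to a single gadget is precisely the step that is easy at depth one and unclear at greater depth.  I would add a third, equally serious issue that your sketch glosses over: even if confinement holds, it is not clear why the restriction of a $(2k-1)$-spanner of the whole graph to a gadget should be forced to behave like a $(2k-3)$-spanner of that gadget.  The stretch parameter does not automatically drop by two per recursion level; a replacement path that stays inside the gadget is still allowed to have up to $2k-1$ hops, which is more slack than the inductive hypothesis for $G_{k-1}$ assumes.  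Without an argument that forces each top-level edge crossing to consume a fixed amount of the stretch budget, the induction does not close.  None of this is a flaw in your exposition so much as a confirmation that the problem is genuinely open, exactly as the paper says.
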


Settling this conjecture, or generally making progress on the gap between the $f^{1/2}$ dependence in this work and the $f^{\frac12 - \frac{1}{2k}}$ lower bound on $f$-dependence from~\cite{BDPW18}, is the main open question left by this work.
We remark that there is some precedent in the literature for fundamentally different behavior for $(2k-1)$-spanners under even/odd $k$.
This occurs, for example, in the \emph{unbalanced Moore bounds} that control the maximum possible density of a bipartite graph with a very different number of nodes on its two sides \cite{hoory2002size}.
We believe that a similar even/odd effect may occur here, where the imbalance is essentially caused by the structure of edge faults.
The fact that the dependence $f^{1/2}$ is known to be tight for $k=2$ partially validates this intuition, and it may be ``simpler'' for the correct bounds to distinguish between even/odd $k$ rather than taking $k=2$ as a special case.


\paragraph{Algorithmic Efficiency.} The spanner construction algorithm that we use to prove Theorem~\ref{thm:main} is the same greedy algorithm as in~\cite{BDPW18,BP19} (adapted for edge fault tolerance), which requires exponential time.  However, by combining our new analysis with the ideas used by~\cite{DR20}, we can obtain polynomial time at the price of a slightly worse dependence on $k$:
\begin{theorem} \label{thm:main-polytime}
There is a polynomial time algorithm that, given positive integers $f, k$ and an $n$-node input graph, outputs an $f$-EFT $(2k-1)$-spanner $H$ with
$$
|E(H)| = \begin{cases} O\left(k^{5/2} f^{1/2 - 1/(2k)} n^{1+1/k} + k^2 fn\right) & k \text{ is odd}\\
O\left(k^{5/2} f^{1/2} n^{1+1/k} + k^2 fn\right) & k \text{ is even.}
\end{cases}
$$
\end{theorem}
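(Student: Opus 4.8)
The plan is to follow \cite{DR20} and replace the exponential-time fault-tolerant greedy algorithm by a polynomial-time ``flow-based greedy''. After the standard weight-bucketing reduction (as used in \cite{BP19,DR20}) we may assume edge weights are polynomially bounded integers and that it suffices to control $(x,y)$-paths of at most $2k-1$ hops. The algorithm then processes the edges of $G$ in nondecreasing weight order and, upon reaching $e = (x,y)$, adds $e$ to the current spanner $H$ if and only if the minimum \emph{fractional} $(x,y)$-cut over $\le 2k-1$-hop paths in $H$ has value at most $f$. (The FT-greedy rule of \cite{BDPW18,BP19} instead asks whether the \emph{integral} such cut is at most $f$, which is NP-hard; relaxing to the fractional cut is the key idea of \cite{DR20}.) This LP, equivalently a hop-bounded max-flow, is solvable in polynomial time, and there are only $O(m)$ iterations, so the algorithm runs in polynomial time. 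I would then verify correctness exactly as in \cite{DR20}: by the standard reduction it suffices to show that for every edge $e=(x,y)\in E(G)$ and every fault set $F$ with $|F|\le f$ and $e\notin F$ there is an $(x,y)$-path of at most $2k-1$ hops in $H\setminus F$. If $e$ was added, the single edge $e\in E(H)\setminus F$ is such a path; if $e$ was not added, then the fractional cut in $H$ at that time exceeded $f$, hence so did the integral cut, so by definition no $\le f$ edges cut all short $(x,y)$-paths, i.e.\ some short path survives $F$ — and it still survives in the final, only larger, $H$.

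For the size bound, the key observation is that the new analysis proving Theorem~\ref{thm:main} should be essentially black-box: it consumes only (i) monotonicity of distances under edge insertions, (ii) the weight-ordering of the edges, and (iii) for each edge $e=(x,y)$ of the output a ``blocker set'' $F_e\subseteq E(H)$ of size at most $f$ that cuts all $\le 2k-1$-hop $(x,y)$-paths present when $e$ was added. The flow-based greedy supplies exactly this interface, except that for an added edge we only obtain a \emph{fractional} hop-bounded cut of value at most $f$; rounding it via the $O(k)$-integrality gap for $\le 2k-1$-hop edge cuts (as in \cite{DR20}) yields an integral blocker set $F_e$ of size $O(kf)$. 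Therefore one may invoke the proof of Theorem~\ref{thm:main} verbatim with $f$ replaced by $O(kf)$. Substituting $f\mapsto O(kf)$ into the bound $O(k^2 f^{1/2-1/(2k)} n^{1+1/k}+kfn)$ for odd $k$ multiplies the $n^{1+1/k}$ term by $(O(k))^{1/2-1/(2k)}=O(k^{1/2})$ and the additive term by $O(k)$, giving $O(k^{5/2} f^{1/2-1/(2k)} n^{1+1/k}+k^2 fn)$; the even case is identical, turning $O(k^2 f^{1/2} n^{1+1/k}+kfn)$ into $O(k^{5/2} f^{1/2} n^{1+1/k}+k^2 fn)$, matching the theorem.

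The main obstacle I anticipate is making the black-box claim honest: one must extract from the proof of Theorem~\ref{thm:main} a precise statement of which structural properties of the FT-greedy output it uses, and then confirm that the flow-based greedy's output together with its $O(kf)$-size blocker sets satisfies that statement — in particular that the analysis never exploits the exact equality in the FT-greedy threshold or the fact that the blocker set is literally an optimal cut, as opposed to merely \emph{some} cut of the relevant size. A secondary, routine point is that the $O(k)$-hop integrality gap is applied after the weight-bucketing reduction, which is orthogonal to the extremal counting; and no new lower-bound work is needed, since tightness (up to $\poly(k)$, which is at most polylogarithmic in $n$) for odd $k$ is inherited from Theorem~\ref{thm:eftlb} just as for Theorem~\ref{thm:main}.
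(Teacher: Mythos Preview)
Your proposal is correct and follows essentially the same route as the paper: run the \cite{DR20} polynomial-time variant of the greedy algorithm, observe that each added edge now carries a ``blocker'' $F_e$ of size $O(kf)$ rather than $f$, and then rerun the extremal analysis underlying Theorem~\ref{thm:main} with $f$ replaced by $O(kf)$, which produces exactly the claimed $k^{5/2}$ and $k^2$ factors.

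Two small remarks. First, the paper resolves your ``main obstacle'' cleanly by factoring the argument through an explicit abstraction: it states a standalone extremal result (Theorem~\ref{thm:tech-main}) that depends only on the existence of a \emph{strong} $2k$-blocking set of size $f|E(H)|$, and then separately shows (Lemma~\ref{lem:greedy-polytime-strong-blocking}) that the polytime greedy output has such a set of size $O(kf)|E(H)|$; so the substitution $f\mapsto O(kf)$ is literal, with no need to revisit the counting/dispersion proofs. Your per-edge blocker sets $F_e$ are exactly what one unions to build this strong blocking set, and the ``strong'' qualifier (one edge of each block is the cycle's heaviest) comes for free from the weight ordering. Second, the weight-bucketing step you invoke is unnecessary here: the paper (following \cite{DR20}) uses the edge weights only to fix the processing order and then runs the hop-based test on the unweighted graph, which already suffices for both correctness and the blocking-set property.
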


\section{Technical Overview} \label{sec:overview}

Before diving into the details, we first give some intuition and context for our techniques.  

\paragraph{Background on Non-Faulty Spanners.}
In the non-faulty setting, the analysis of the greedy construction by Alth{\" o}fer et al.~\cite{AlthoferDDJS:93} passes through the \emph{girth} of the output spanner (recall that the girth of a graph is the smallest number of edges in a cycle).
One proves that the output spanner always has girth $>2k$, and conversely, any unweighted graph of girth $>2k$ has no $(2k-1)$-spanner except for the graph itself.
Phrased another way, let us write $\gamma(n, 2k)$ for the maximum number of edges in any $n$-node graph of girth $>2k$.
Then the greedy algorithm outputs spanners on $\le \gamma(n, 2k)$ edges, and this bound is best possible.

The only known argument to upper bound the value of $\gamma(n, 2k)$ is the \emph{Moore bounds}, which prove that $\gamma(n, 2k) = O(n^{1+1/k})$ (the girth conjecture~\cite{erdHos1964extremal} is that this is tight, i.e., $\gamma(n,2k) = \Omega(n^{1+1/k})$).
These use a counting argument over the simple $k$-paths of the input graph, where a ``simple'' path is one that does not repeat nodes.
The Moore bounds are proved in two steps.
First, one proves a \emph{counting lemma}: any $n$-node graph of average degree $d \gg 1$ and girth $>2k$ has $n \cdot \Omega(d)^k$ total simple $k$-paths.
Then, one proves a \emph{dispersion lemma}: in a graph of girth $>2k$, no two simple $k$-paths may share endpoints.
Together these imply $n \cdot \Omega(d)^k = O(n^2)$, and the Moore bound are obtained by rearranging this equation.

\paragraph{Background on FT-Spanners.}
In the faulty setting, it is natural to consider the ``FT-Greedy Algorithm,'' which straightforwardly extends the non-faulty greedy algorithm.
That is: consider each edge $(u, v)$ in order of nondecreasing weight, and add $(u, v)$ to the spanner if and only if there is a set of $|F| \le f$ faults such that $\dist_{H \setminus F}(u, v) > (2k-1) \cdot \dist_{G \setminus F}(u, v)$.
Correctness is again easy to show, but the challenge is to control the number of edges in the output spanner.
Informally speaking, the output spanner ought to be ``close'' to a high-girth graph -- not in the sense that its girth is high (the girth of the output spanner could be $3$), but in the sense that the output graph ought to be sparse for a similar reason that high-girth graphs must be sparse.
There are two basic approaches to formalizing this intuition:

\begin{enumerate}
    \item The first analysis of the greedy algorithm \cite{BDPW18} argued that greedy FT-spanners are similar to high-girth graphs in the sense that they are amenable to a generalized version of the Moore bound analysis.
    This led to a rather complicated analysis that took on $\exp(k)$ factors, but which gave optimal dependence on $n$ and $f$ for the size of VFT spanners, assuming the girth conjecture.
    Indeed, arguments of this type can only hope to achieve \emph{conditional} optimality 
    on the girth conjecture, since the Moore bounds themselves are only optimal if the girth conjecture holds.

    \item The following analyses~\cite{BP19, DR20, BDR21} took an alternate view that greedy FT-spanners are \emph{structurally similar} to high-girth graphs, in the sense that one can change the spanners into high-girth graphs by sacrificing only a small amount of their density.
    This approach has a number of benefits over the generalized Moore argument: it is far simpler, it removes all extra factors of $k$ in the spanner size, and it lets one ``black-box'' the Moore bounds.  That is, it gives bounds that depend directly on $\gamma(n,2k)$ rather than the Moore bounds for this function, thus establishing optimal spanner size in the VFT setting even if the girth conjecture fails.
    So the message of these papers seemed to be that the structural approach, which only uses the Moore bounds as a black-box to plug into $\gamma(n,2k)$, dominates the approach that opens the black box of the Moore bounds in every important way.
\end{enumerate}

Since a gap remained for EFT spanners, it was still technically unresolved whether Moore or structural arguments are to be preferred in this arena.
This point was discussed in~\cite{BP19}, which contained a technical barrier suggesting that it would be difficult to push the EFT upper bounds any further using the latter structural approach.
Specifically, they suggested the following objects for EFT structural analysis:
\begin{definition} [Edge Blocking Sets \cite{BP19}] \label{def:blocking}
Given a graph $G = (V, E)$, a \emph{$t$-edge blocking set} for $G$ is a set $B \subseteq \binom{E}{2}$ such that, for every cycle $C \in G$ with at most $t$ edges, there exists $(e_1,e_2) \in B$ such that $e_1, e_2 \in C$. 
\end{definition}

One proves the previous-best upper bounds on FT-greedy spanners by (1) observing that the output spanners $H$ from the EFT-greedy algorithm have $2k$-edge blocking sets of size $|B| \le f |E(H)|$, and then (2) proving that \emph{any} graph with a $2k$-edge blocking set of size $|B| \le f |E(H)|$ can only have $O(f^{1-1/k} n^{1+1/k})$ edges.
The second lemma is the key to the structural approach, which appeared again in some form in followup work \cite{DR20, BDR21}.
However, it was further proved in \cite{BP19} that this second lemma is tight: there exist graphs that have a blocking set of size $|B| \le f|E(H)|$ and $\Omega(f^{1-1/k} n^{1+1/k})$ edges.
So, if better EFT upper bounds are to be achieved, they at least need to use a different object from blocking sets, and perhaps depart from this style of argument altogether.

\subsection{Our Approach}
We show, perhaps surprisingly, that the original Moore-based approach seems to dominate the structural one in the context of EFT spanners.
Our analysis is best viewed as a generalization of the Moore bounds,
and moreover, as an auxiliary result we strengthen the evidence from \cite{BP19} that structural arguments cannot possibly achieve the improved EFT upper bounds obtained in this paper.

First, to sidestep the barrier from \cite{BP19}, we need to change the focus of our analysis.
We use \emph{strong blocking sets}, a natural extension that explicitly takes edge weights into account.
While the original definition of blocking sets is just about edges in cycles, we now require one of the edges to be the \emph{heaviest} edge.  More formally:
\begin{definition} [Strong Blocking Sets] \label{def:strong-blocking}
A \emph{strong $t$-blocking set} is some $B \subseteq \binom{E}{2}$ such that, for any cycle $C$ on at most $t$ edges, there is $(e_1, e_2) \in B$ with $e_1, e_2 \in C$ and such that one of $e_1, e_2$ is the heaviest edge in $C$.
The elements $(e_1, e_2) \in B$ are called \emph{blocks}.
\end{definition}

(We assume that ties between edge weights are broken in some canonical way, and so ``the heaviest edge'' in $C$ is unambiguous.)
It is again easy to prove that the FT-Greedy algorithm produces spanners that have strong blocking sets of size $\le f|E(H)|$.
Since the girth of a graph has nothing to do with its edge weights, we can view this definition as a step away from a purely structural approach, and towards an approach that cares about the specific interaction between the input graph and the FT-greedy algorithm (which considers edges in increasing weight order).

The main new technical idea goes into our generalized dispersion lemma.
The goal is to bound the total number of simple $k$-paths in the graph.
The core of the argument is the following lemma: up to some technical details that we will not discuss here (see Section~\ref{sec:strong-blocking}), due to the existence of a small strong blocking set, it is nearly true that for any two nodes $s, t$ in the output spanner, there is a set of $O(kf)$ edges such that every simple $s \leadsto t$ $k$-path uses one of these edges as its \emph{heaviest} edge.
We may then bound the number of simple $s \leadsto t$ $k$-paths by considering each edge $(u, v)$ in this set and recursively bounding the number of $s \leadsto u$ $j$-paths and the number of $v \leadsto t$ $(k-j-1)$-paths (for different values of $j$).

This recursion turns out to give the strongest bounds when $1 \le j \le k-1$; that is, $(u, v)$ is a \emph{middle} edge along a path, not the first or last one.
This motivates a narrowing of focus in our counting argument from all simple $k$-paths to only those that are \emph{middle-heavy} (have their heaviest edge in a middle position).
In fact, we need our paths to stay middle-heavy over the recursion, as they are repeatedly split over their heaviest edge.
So we focus even more narrowly on a class of \emph{alternating paths}, for which each even-numbered edge is heavier than the two odd-numbered edges on either side, which have this favorable property.
This induces some additional technical work in the counting lemma: standard arguments guarantee the existence of many simple $k$-paths, but we need to rework these arguments to show that a good fraction of these simple $k$-paths are indeed alternating.

\paragraph{Lower Bounds on Structural Analysis.}
Although strong blocking sets move away from structural analyses a bit, it is still reasonable to ask whether our argument can be replaced with a structural one somehow.
Next, we explain our new evidence that the answer to this question is probably ``no'': we show that a structural analysis would by necessity involve solving a long-standing open problem in extremal graph theory.
As mentioned earlier, a key advantage of structural analyses is that they give spanner upper bounds directly in terms of the function $\gamma$.
For example, \cite{BP19} shows that all graphs have EFT or VFT $(2k-1)$ spanners on $O(f^2 \cdot \gamma(n/f, 2k))$ edges.
The Moore bounds are applied as a secondary step to bound $\gamma$, giving the upper bound of $O(f^{1-1/k} n^{1+1/k})$ from Theorem \ref{thm:ftub}.
But if the girth conjecture fails, and there is an even better upper bound for $\gamma$ than the Moore bounds, then the upper bounds from these arguments automatically improve as well.
Conversely, if one can prove \emph{unconditionally} that $f$-VFT spanners need $\Omega(f^{1-1/k} n^{1+1/k})$ edges, this would imply a lower bound on $\gamma$ that would be enough to prove the girth conjecture, which is currently unknown and is a famous and longstanding open problem.

Let us now entertain the hypothetical that one could prove our upper bounds for EFT spanners with a structural argument, arguing that greedy EFT-spanners can be reduced to high-girth graphs while sacrificing only a small amount of density.
For example, analogous to \cite{BP19}, let us imagine that we have a bound of the form $O_k(f \cdot \gamma(n/\sqrt{f}, 2k) + nf)$ edges for odd $k$, which would in turn imply our upper bound of $O_k(f^{1/2 - 1/(2k)} n^{1+1/k} + nf)$ for odd $k$ by applying the Moore bounds to $\gamma$.
Then if one could prove \emph{unconditionally} that $f$-EFT spanners need $\Omega(f^{1/2 - 1/(2k)} n^{1+1/k} + nf)$ edges for fixed odd $k$, it would imply a lower bound on $\gamma$ that would be enough to prove the girth conjecture for fixed odd $k$, which would again be a major breakthrough.

In fact, we can prove a new unconditional bound of this type -- not for all fixed odd $k$, but specifically for $k=7$.  A proof of the following theorem appears in Appendix~\ref{app:lbs}.
\begin{theorem}
For $k=7$ and $f = \Omega(n^{1/11})$, the lower bound on EFT spanners in Theorem \ref{thm:eftlb} holds unconditionally.
\end{theorem}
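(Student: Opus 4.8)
The conditional bound of Theorem~\ref{thm:eftlb} for $k>2$ is proven in \cite{BDPW18} by an \emph{edge-fault blow-up} of a dense girth-$(>2k)$ graph: one takes a girth-$(>2k)$ ``core'' $G^\star$ on $m$ vertices, replaces each vertex by an independent set of copies and each edge by a complete bipartite bundle of copies, and observes that since every cycle of $G^\star$ has more than $2k$ edges, each bundle edge is forced into any $f$-EFT $(2k-1)$-spanner: after faulting all but one edge of its bundle, every remaining $x$--$y$ path projects to a $u$--$v$ walk of $G^\star$ that avoids the corresponding core edge, and hence has length at least $\mathrm{girth}(G^\star)-1>2k-1$. (In \cite{BDPW18} the core is taken regular of degree $\Theta(m^{1/k})$ and the blow-up uniform of factor $\Theta(\sqrt f)$, giving $n=\Theta(\sqrt f\, m)$ vertices and $\Theta(f m^{1+1/k})=\Theta(f^{1/2-1/(2k)}n^{1+1/k})$ edges.) The girth conjecture enters \emph{only} in choosing $G^\star$ to have the maximum density $\Theta(m^{1+1/k})$. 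For $k=7$ we need $\mathrm{girth}(G^\star)>14$, i.e.\ girth at least $16$, and the plan is to replace the conjectural extremal core by an explicit one and blow it up \emph{non-uniformly} to exploit its structure.

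\paragraph{The construction and parameters.}
The explicit core I would use is the point--line incidence graph of a generalized octagon $GO(q,q^2)$, which exists (by Feit--Higman, via the Ree groups) for $q=2^{2e+1}$: it is bipartite with girth exactly $16$, has $\Theta(q^{11})$ vertices of degree $\Theta(q)$ on the line side and $\Theta(q^{10})$ vertices of degree $\Theta(q^2)$ on the point side, with $\Theta(q^{12})$ edges in total. Blow up the line side by a factor $t_L$ and the point side by a factor $t_P$, using $K_{t_L,t_P}$ bundles. Since $\mathrm{girth}(G^\star)=16>14$, the blow-up is its own unique $f$-EFT $13$-spanner as long as each bundle has at most $f+1$ edges, i.e.\ $t_L t_P\le f+1$; the blow-up then has $n=\Theta(t_L q^{11}+t_P q^{10})$ vertices and $\Theta(t_L t_P\, q^{12})=\Theta(f q^{12})$ edges. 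I would then optimize by taking $t_L t_P=\Theta(f)$ and balancing the two terms of $n$, i.e.\ $t_L q^{11}\asymp t_P q^{10}$: this gives $t_L\asymp\sqrt{f/q}$, $t_P\asymp\sqrt{fq}$, hence $n\asymp\sqrt f\, q^{21/2}$, hence $q\asymp(n/\sqrt f)^{2/21}$. The edge count becomes $\Theta(f q^{12})=\Theta\!\bigl(f(n/\sqrt f)^{8/7}\bigr)=\Theta(f^{3/7}n^{8/7})$, matching Theorem~\ref{thm:eftlb}. The sole side condition is $t_L\ge 1$, i.e.\ $f\gtrsim q\asymp(n/\sqrt f)^{2/21}$, which rearranges to $f^{11}\gtrsim n$, that is $f=\Omega(n^{1/11})$ --- exactly the hypothesis. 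The additive $fn$ term, and rounding $q$ down to an admissible value (costing only a constant factor), are handled as in the folklore and standard arguments.

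\paragraph{Main obstacle.}
Two points need care. First, I would verify that the edge-fault blow-up analysis of \cite{BDPW18} goes through verbatim for a \emph{biregular} core blown up by \emph{different} factors on the two sides --- that every bundle edge is still forced by at most $f$ edge faults using only $\mathrm{girth}(G^\star)>14$; this looks routine, since the projection-to-a-walk argument above never used regularity. The more delicate part is the extremal-graph bookkeeping together with the optimization: pinning down the exact vertex, edge, and degree counts of $GO(q,q^2)$, checking that the optimization over $(q,t_L,t_P)$ really does bottom out at $f^{3/7}n^{8/7}$ with the threshold being exactly $n^{1/11}$, and confirming that no cleverer (for instance even more unbalanced) blow-up of $GO(q,q^2)$ --- or of the Lazebnik--Ustimenko--Woldar graphs $D(11,q)$, which give girth $16$ at the same density --- does better in some sub-regime. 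I expect this threshold computation to be where any subtlety hides. Finally, it is worth recording \emph{why} the argument is special to $k=7$: girth $16=2\cdot7+2$ is exactly the minimum bipartite girth needed to force stretch $13$, so the octagon wastes no girth, whereas for $k\ge 8$ one would need girth at least $18$, for which no generalized polygon exists and the best explicit constructions are strictly sparser.
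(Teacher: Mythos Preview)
Your proposal is correct and essentially identical to the paper's proof. The paper also takes the incidence graph of the Ree--Tits octagon (parametrized as $n$ points on $\Theta(n^{10/11})$ lines rather than by $q$, but this is the same bipartite graph of girth $16$), blows up the two sides by the non-uniform factors $\sqrt{f/q}$ and $\sqrt{fq}$ that you arrive at by balancing, and uses the same projection-to-a-closed-walk argument to show each bundle edge is forced; the threshold $f=\Omega(n^{1/11})$ appears for exactly your reason, namely that the smaller blow-up factor must be at least $1$.
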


Thus, a structural argument as explained above would be enough to prove the girth conjecture for $k=7$, which is currently unknown.
Since no new setting of the girth conjecture has been settled in over 50 years, this is likely very hard, and it would still constitute a breakthrough in the area.



\section{From Greedy to Strong Blocking} \label{sec:greedy-blocking}

We begin our technical work by proving that the greedy FT-spanner algorithm analyzed in \cite{BDPW18,BP19} gives spanners with small strong blocking sets.

\begin{algorithm}
\caption{Greedy $f$-EFT $(2k-1)$-Spanner Algorithm}
\label{ALG:old}
\begin{algorithmic}
\STATE $\mathbf{function}$ FT-GREEDY$(G = (V,E,w),k,f)$\\
\STATE $H\leftarrow (V,\emptyset,w)$
\FORALL{$(u,v)\in E$ in nondecreasing weight order}
    \IF{there exists a set $F$ of at most $f$ edges such that $d_{H\setminus F}(u,v) > (2k-1) w(u,v)$}
        \item add $(u,v)$ to H
    \ENDIF
\ENDFOR
\RETURN H
\end{algorithmic}
\end{algorithm}



This algorithm implicitly breaks ties between edges of equal weight, to consider one before the other in the main for-loop.
This tiebreaking is arbitrary, but in the sequel it will be convenient to unambiguously refer to \emph{the heaviest} edge among a set of edges, which means the heaviest edge under this tiebreaking (equivalently, the edge considered last by the greedy algorithm).
In particular, in the definition of strong blocking sets, it is required that for each cycle $C$, there is a block $(e_1, e_2)$ with $e_1, e_2 \in C$ and such that either $e_1, e_2$ is the heaviest edge in $C$ \emph{under this tiebreaking}.

The following lemma is a straightforward adaptation of Lemma 3 of~\cite{BP19}, where the same statement was proved for blocking sets rather than strong blocking sets.

\begin{restatable}{lemma}{greedy} \label{lem:greedy-strong-blocking}
Any graph $H$ returned by Algorithm~\ref{ALG:old} is an $f$-EFT $(2k-1)$-spanner that has a strong $2k$-blocking set of size at most $f|E(H)|$.
\end{restatable}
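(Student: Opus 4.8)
The plan is to prove the two claims in the lemma separately: first that the output graph $H$ is a valid $f$-EFT $(2k-1)$-spanner, and second that the set of ``witness pairs'' recorded implicitly by the algorithm forms a strong $2k$-blocking set of the claimed size.

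\textbf{Correctness of the spanner.} For the first part I would argue exactly as in the non-faulty greedy analysis. Fix a fault set $F$ with $|F| \le f$ and a pair $u,v$; I want $d_{H \setminus F}(u,v) \le (2k-1)\, d_{G \setminus F}(u,v)$. It suffices to show that for \emph{every} edge $(x,y) \in E(G) \setminus F$ we have $d_{H \setminus F}(x,y) \le (2k-1)\, w(x,y)$, since then any shortest $u \leadsto v$ path in $G \setminus F$ can be replaced edge-by-edge by a detour in $H \setminus F$ of at most $(2k-1)$ times the length, and these concatenate. So fix such an edge $(x,y)$. If $(x,y) \in E(H)$ then $d_{H\setminus F}(x,y) = w(x,y)$ and we are done. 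Otherwise, when the greedy algorithm considered $(x,y)$, it did \emph{not} add it, so for the current spanner $H'$ at that time (a subgraph of the final $H$), and in particular for our fixed $F$, we had $d_{H' \setminus F}(x,y) \le (2k-1)\, w(x,y)$; since $H' \subseteq H$, the same bound holds in $H \setminus F$. This is the routine direction.

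\textbf{Construction of the strong blocking set.} For the second part, whenever the algorithm adds an edge $e = (u,v)$, there is a certifying fault set $F_e$ with $|F_e| \le f$ and $d_{H' \setminus F_e}(u,v) > (2k-1)\, w(u,v)$, where $H'$ is the current spanner. I would define $B$ to contain, for each such added edge $e$ and each $e' \in F_e$, the pair $(e, e')$; this gives $|B| \le \sum_{e \in E(H)} |F_e| \le f |E(H)|$. It remains to check $B$ is a strong $2k$-blocking set. Take any cycle $C$ in $H$ on at most $2k$ edges, and let $e = (u,v)$ be its heaviest edge under the tiebreaking — equivalently, the last edge of $C$ added by the algorithm. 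At the moment $e$ was considered, all other edges of $C$ were already present in $H'$, so $C \setminus \{e\}$ is a $u \leadsto v$ path of length $\le 2k-1$ in $H'$, and moreover each of its edges has weight $\le w(u,v)$ (since $e$ is heaviest in $C$), so its total weight is $\le (2k-1)\, w(u,v)$. Since adding $e$ was justified, $d_{H' \setminus F_e}(u,v) > (2k-1)\, w(u,v)$, hence the path $C \setminus \{e\}$ cannot survive in $H' \setminus F_e$: some edge $e' \in C \setminus \{e\}$ lies in $F_e$. Then $(e, e') \in B$, both $e, e' \in C$, and $e$ is the heaviest edge of $C$, which is exactly what the definition of a strong $2k$-blocking set requires.

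\textbf{Main obstacle.} The argument is mostly bookkeeping, and the one place that needs care — and is the only genuine content beyond the original \cite{BP19} lemma — is the observation that the heaviest edge of $C$ is precisely the edge added last, so that when it was processed all other cycle edges were present and formed a short low-weight path. This uses the canonical tiebreaking convention (``heaviest'' $=$ ``considered last''), and it is exactly what upgrades an ordinary blocking set to a \emph{strong} one: the blocked edge $e'$ is paired with an $e$ that is guaranteed to be the heaviest in $C$. I would make sure to state this weight bound on $C \setminus \{e\}$ explicitly, since without ``each edge of $C \setminus \{e\}$ has weight at most $w(e)$'' the detour could be too long to derive a contradiction.
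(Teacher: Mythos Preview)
Your proposal is correct and follows essentially the same approach as the paper: define $B$ from the witness fault sets $F_e$, then for any short cycle $C$ take its heaviest (i.e., last-added) edge $e=(u,v)$, observe that $C\setminus\{e\}$ is a $u\leadsto v$ path of weight $\le (2k-1)w(e)$ already present when $e$ was considered, and conclude that $F_e$ must hit it. The only differences are expository---you spell out the spanner-correctness argument that the paper delegates to a citation, and you make explicit the per-edge weight bound on $C\setminus\{e\}$ that the paper leaves implicit.
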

\begin{proof}
See Appendix \ref{app:greedy-blocking}.
\end{proof}

One major downside of this algorithm is its running time.  In each iteration of this algorithm, the algorithm needs to decide the following question: given an edge $(u,v)$, is there a fault set $F\subseteq E$ where $|F|\leq f$ and $d_{H\setminus F}(u,v)\leq (2k-1)\cdot d_{G\setminus F}(u,v)$? Doing this in the obvious way (checking all sets of $f$ or fewer edges) takes time exponential in $f$.  And, unfortunately, it turns out that this question is equivalent to the Length Bounded Cut problem (LBC), which is known to be NP-hard \cite{BEHKSS06}.  

In order to develop a polynomial time algorithm for constructing FT-spanners, \cite{DR20} replaced this exponential time subroutine with one that ran in polynomial time. They designed a simple $(2k-1)$-approximation algorithm for the \emph{unweighted} setting which is essentially the standard frequency approximation of Set Cover.  In particular, when used for the decision variant it has the property that if there is a fault set of size at most $f$ that intersects all $u-v$ paths with at most $2k-1$ hops then the algorithm will return YES, while if all such fault sets have size larger than  $(2k-1)f$ then it will return NO (if neither is true then the algorithm can return either answer).  We call this algorithm $\mathcal A(G, u, v, f, k)$


A key insight is of~\cite{DR20} was that, even though $\mathcal A$ is designed for the unweighted setting, it can still be used in the weighted setting.  In particular, if we use the weights to set the initial ordering of edges (as in Algorithm~\ref{ALG:old}) but from then on pretend that the graph is unweighted, this modified greedy algorithm still returns a valid $f$-EFT $(2k-1)$-spanner.  This gives the following polynomial time algorithm for constructing EFT spanners.   

\begin{algorithm}
\caption{Modified Greedy EFT Spanner Algorithm}
\label{ALG:polytime}
\begin{algorithmic}
\STATE $\mathbf{function}$ FT-GREEDY$(G = (V,E),k,f)$\\
\STATE $H\leftarrow (V,\emptyset)$
\FORALL{$(u,v)\in E$ in nondecreasing weight order} 
    \IF{$\mathcal A(H, u, v, f, k)$ returns YES}
    \STATE Add $(u,v)$ to $H$
    \ENDIF
\ENDFOR
\RETURN H
\end{algorithmic}
\end{algorithm}

In order to analyze the sparsity of spanners constructed by this algorithm, \cite{DR20} showed that the blocking set analysis of~\cite{BP19} can be applied with an extra loss of $O(k)$ in the size (due to the approximation).  We show the same, but for our new notion of strong blocking sets.



\begin{restatable}{lemma}{polygreedy} \label{lem:greedy-polytime-strong-blocking}
The graph $H$ returned by Algorithm~\ref{ALG:polytime} is an $f$-EFT $(2k-1)$-spanner that has a strong $2k$-blocking set of size at most $O(kf) \cdot |E(H)|$. 
\end{restatable}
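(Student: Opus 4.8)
The plan is to follow exactly the template of Lemma~\ref{lem:greedy-strong-blocking}, but with two changes: replace the exact (exponential-time) fault-set decision by the approximation guarantee of $\mathcal A$, and keep track of the place where this costs an extra $O(k)$ factor. Throughout I would fix notation: when an edge $e=(u,v)$ is reached in the main loop of Algorithm~\ref{ALG:polytime}, let $H_e$ denote the subgraph built so far. Since edges are processed in nondecreasing weight order with a fixed tiebreaking rule, every edge of $H_e$ has weight at most $w(u,v)$; consequently, for any cycle $C\subseteq H$ with heaviest edge $e$, every other edge of $C$ was considered (and therefore added) before $e$, so $C\setminus\{e\}\subseteq H_e$. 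This one observation — due to \cite{DR20} — is what lets an unweighted, hop-bounded oracle do useful work in the weighted setting, and it is also what makes the heaviest-edge requirement of Definition~\ref{def:strong-blocking} come for free.

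First I would re-establish correctness. Fix $F\subseteq E$ with $|F|\le f$ and a pair $x,y$, and let $P$ be a shortest $x$–$y$ path in $G\setminus F$. For each edge $e=(u,v)\in P$ that was \emph{not} added, $\mathcal A(H_e,u,v,f,k)$ returned NO, so by the stated guarantee no fault set of size at most $f$ meets every $u$–$v$ path of at most $2k-1$ hops in $H_e$; in particular $F\cap E(H_e)$ (which has size at most $f$) misses some such path $Q\subseteq H_e\setminus F$, which has at most $2k-1$ edges, each of weight at most $w(u,v)$, so $d_{H\setminus F}(u,v)\le (2k-1)\,w(u,v)$. Concatenating these detours, and keeping the edges of $P$ already present in $H$ unchanged, produces an $x$–$y$ walk in $H\setminus F$ of total weight at most $(2k-1)\,w(P)=(2k-1)\,d_{G\setminus F}(x,y)$; since $x,y$ and $F$ were arbitrary, $H$ is an $f$-EFT $(2k-1)$-spanner.

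Next, for the strong blocking set, for each added edge $e=(u,v)\in E(H)$ the call $\mathcal A(H_e,u,v,f,k)$ returned YES, so — again by the stated guarantee, used in contrapositive — it is \emph{not} the case that every fault set meeting all $u$–$v$ paths of at most $2k-1$ hops in $H_e$ has size more than $(2k-1)f$; hence there exists such a fault set $F_e\subseteq E(H_e)$ with $|F_e|\le (2k-1)f$. I would then set $B:=\bigcup_{e\in E(H)}\{(e,e'):e'\in F_e\}$, so that $|B|\le (2k-1)f\,|E(H)|=O(kf)\,|E(H)|$. To check that $B$ is a strong $2k$-blocking set for $H$: given any cycle $C\subseteq H$ with at most $2k$ edges, let $e=(u,v)$ be its heaviest edge; then $C\setminus\{e\}$ is a $u$–$v$ path with at most $2k-1$ edges lying entirely in $H_e$ (by the weight-order observation), so $F_e$ contains some edge $e'$ of it, and $(e,e')\in B$ is a block with $e,e'\in C$ and $e$ the heaviest edge of $C$ — precisely what Definition~\ref{def:strong-blocking} asks for.

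I do not expect any step here to be genuinely hard; the whole argument is bookkeeping once the weight-order observation is in place. The one point that needs care, and the only place the $O(k)$ blow-up over Lemma~\ref{lem:greedy-strong-blocking} enters, is the extraction of the approximate witness $F_e$ of size $\le (2k-1)f$ from a YES answer: here one must lean on the fact that $\mathcal A$ is the frequency/Set-Cover approximation of \cite{DR20} (equivalently, on the $(2k-1)f$ side of its promise), rather than on any exact solution to Length-Bounded Cut, which is NP-hard. The correctness argument similarly leans only on the ``size $\le f$'' side of the promise, so the two halves of the lemma use the two halves of $\mathcal A$'s guarantee in a symmetric way.
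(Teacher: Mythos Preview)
Your proposal is correct and follows essentially the same approach as the paper: both extract, for each added edge $e$, a witness fault set $F_e$ of size at most $(2k-1)f$ from the YES answer of $\mathcal A$, set $B=\{(e,e'):e\in E(H),\,e'\in F_e\}$, and verify the strong blocking property by taking the heaviest (equivalently, last-added) edge of any short cycle. Your write-up is in fact more explicit than the paper's on the correctness half (which the paper defers to \cite{DR20}), and your opening weight-order observation cleanly isolates why the heaviest-edge clause of Definition~\ref{def:strong-blocking} is automatic.
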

\begin{proof}
See Appendix \ref{app:greedy-blocking}
\end{proof}

\section{From Strong Blocking Sets to Extremal Density Bounds} \label{sec:strong-blocking}

In this section, we will prove:
\begin{theorem} \label{thm:tech-main}
Let $f$ be a positive integer, let $k$ be a positive integer, and let $H$ be an $n$-node graph that has a strong $2k$-blocking set $B$ of size $|B| \le |E|f$.
Then
$$|E(H)| = \begin{cases} O\left(k^2 f^{1/2 - 1/(2k)} n^{1+1/k} + nfk\right) & k \text{ is odd}\\
O\left(k^2 f^{1/2} n^{1+1/k} + nfk\right) & k \text{ is even}.
\end{cases}$$
\end{theorem}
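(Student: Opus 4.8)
The plan is to follow the generalized Moore-bound strategy sketched in Section~\ref{sec:overview}: bound $|E(H)|$ by a counting argument over a suitable class of simple $k$-paths, using the strong blocking set $B$ to control how many such paths can share a pair of endpoints. Assume $H$ has average degree $d$; our goal is to derive an inequality of the form $n \cdot (\text{something like } d^k) \le (\text{a polynomial in } n, f)$ and rearrange. The first step is the \emph{counting lemma}: I would show that if $d$ is large enough (say $d \ge C k f$ for a suitable constant, which we may assume since otherwise $|E(H)| = O(nfk)$ and we are done), then $H$ contains $n \cdot \Omega(d/k)^{?}$ simple $k$-paths, and moreover that a constant fraction of them are \emph{alternating} in the sense defined in the overview — every even-indexed edge is heavier than its two odd-indexed neighbors. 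The reason we need the alternating refinement, rather than arbitrary simple $k$-paths, is that in the dispersion step we will repeatedly split a path at its heaviest edge, and we need the two resulting sub-paths to again lie in the class we are counting; the ``heaviest edge in a middle position'' property is exactly what survives this recursion. The standard BFS-layering / high-girth counting argument produces many simple paths; I would adapt it by, at each step, only extending along an edge whose weight has the correct parity relation to the last edge added, and argue (e.g.\ by an averaging / deletion argument on the edge-weight order within each vertex neighborhood) that one does not lose more than a $\poly(k)$ factor by imposing this restriction.

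The second and main step is the \emph{dispersion lemma}. Fix two nodes $s, t$. The key claim (modulo the "technical details" alluded to in the overview) is that because $B$ has size $\le |E|f$, for each ordered pair $(s,t)$ there is a set $E_{s,t}$ of $O(kf)$ edges such that every alternating simple $s \leadsto t$ $k$-path has its heaviest edge in $E_{s,t}$. Granting this, let $P(s,t,k)$ denote the number of alternating simple $s\leadsto t$ paths of length exactly $k$; splitting each such path at its heaviest edge $(u,v) \in E_{s,t}$, sitting in position $j+1$ with $1 \le j \le k-1$, gives
$$P(s,t,k) \;\le\; \sum_{(u,v) \in E_{s,t}} \sum_{j=1}^{k-1} P(s,u,j)\cdot P(v,t,k-1-j).$$
Here I would set up an induction on $k$: the base cases $P(\cdot,\cdot,0)=1$ and $P(\cdot,\cdot,1)\le 1$ are trivial, and the inductive hypothesis is a bound of the form $P(s,t,j) \le g(j)$ for an appropriate function $g$. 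Plugging in and summing over all $\binom{n}{2}$ pairs $(s,t)$ yields an upper bound on the total number of alternating simple $k$-paths in terms of $n$, $f$, and $k$ — crucially, the total count is at most roughly $n^2 \cdot (kf)^{?}$ where the exponent on $kf$ is governed by how many times the recursion branches, which is what produces the $f^{1/2}$-type savings (the recursion tree has depth $\Theta(\log k)$ rather than $k$, so we pay $kf$ only $\Theta(\log k)$ times — actually, more carefully, the exponent works out to roughly $k/2$, giving $f^{1/2}$). Combining with the counting-lemma lower bound $n \cdot \Omega(d/k)^{k} \le (\text{upper bound})$ and solving for $d$ gives $d = O\big(k^{?} f^{1/2} n^{1/k}\big)$, hence $|E(H)| = nd/2$ has the claimed form; the even/odd distinction comes from whether the recursion on an alternating path of odd versus even length can be made to terminate symmetrically, which costs an extra $f^{1/(2k)}$ in the even case.

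The step I expect to be the main obstacle is establishing the dispersion claim: that a strong blocking set of size $\le |E|f$ forces, for every pair $(s,t)$, a set of only $O(kf)$ ``allowed heaviest edges'' across all alternating $s\leadsto t$ $k$-paths. The definition of a strong $2k$-blocking set only tells us that every short \emph{cycle} contains a block $(e_1,e_2)$ one of whose members is the cycle's heaviest edge; to convert a statement about cycles into a statement about pairs of $k$-paths we form the cycle $P \cup Q$ from two alternating $s\leadsto t$ paths $P,Q$ (of total length $\le 2k$), identify its heaviest edge — which lies in whichever of $P,Q$ contains the globally heaviest edge — and use the block to tie the two paths together. Making this actually bound the number of distinct heaviest edges by $O(kf)$, rather than something larger, requires a careful charging argument: I would root the analysis at, say, the path $Q$ that globally minimizes its heaviest edge weight, show every other alternating $s\leadsto t$ path's heaviest edge is ``blocked against'' an edge of $Q$, and bound the number of such edges via $|B| \le |E|f$ together with the fact that $Q$ has only $k$ edges. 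Handling the parity/alternating bookkeeping so that the sub-paths produced in the recursion remain in the counted class — and so that the $O(kf)$-sized "middle" structure is genuinely used rather than a weaker $O(kf)$ bound that only controls first/last edges — is where I anticipate the real technical work, and likely where the paper's ``technical details that we will not discuss here'' are concentrated.
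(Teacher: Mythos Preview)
Your high-level plan matches the paper's: a counting lemma and a dispersion lemma for alternating simple $k$-paths, with the dispersion proved by recursively splitting at the heaviest edge after establishing a ``choke set'' $E_{s,t}$ of size $O(kf)$. Two genuine gaps separate your sketch from a working proof.

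First, the dispersion argument only applies to \emph{unblocked} paths (paths containing no pair from $B$). When you form the cycle $P\cup Q$ from two $s\leadsto t$ paths, the strong blocking set gives a block $(e_1,e_2)$ with one of them the cycle's heaviest edge, but to charge you need the \emph{other} edge to land in the opposite path --- and this fails if $P$ or $Q$ is itself blocked. So your counting lemma must lower-bound the number of \emph{unblocked} alternating simple $k$-paths, and a BFS/layering argument cannot see the blocking set at all. The paper handles this by a random-subsampling bootstrap: sample each edge with probability $\Theta(k/d)$; since $d\gg fk$, the expected number of surviving blocks is $O(nk)$, small enough to delete one edge from each and still have $\ge 2kn$ edges; an elementary induction then gives $\ge kn$ edge-simple alternating $k$-paths, all unblocked by construction; dividing by the survival probability $(\Theta(k/d))^k$ yields $n\cdot\Omega(d/k)^k$ unblocked alternating paths in $H$.

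Second, your charging argument for $|E_{s,t}|=O(kf)$ invokes only the global bound $|B|\le |E|f$, which says nothing about how many blocks touch the $k$ edges of your anchor path $Q$. The paper first passes (by a Markov-and-delete step costing a constant factor) to the stronger hypothesis that \emph{every} edge lies in at most $f$ blocks; only then does pigeonhole over the $\le k$ edges of the reference path yield the $kf+1$ bound. One further point you underplay: the reason alternating paths save $f^{1/2}$ is precisely that their heaviest edge always sits in an \emph{even} position, so in the recursion both subpaths have odd-length prefix $i-1$ and are themselves alternating --- this is what turns the recursion into $P(j)\le O(k^2 f)\cdot\max_i P(i-1)P(j-i)$ with $i$ even, giving $O(k^2 f)^{\lfloor j/2\rfloor}$ rather than $O(k^2 f)^{j-1}$, and is the source of the even/odd-$k$ dichotomy.
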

This theorem together with Lemma~\ref{lem:greedy-strong-blocking} directly implies Theorem~\ref{thm:main}.
For the polynomial-time version, note that Lemma~\ref{lem:greedy-polytime-strong-blocking} gives us a polytime algorithm with $|B| = O(kf|E(H)|)$, so we can reparameterize Theorem~\ref{thm:tech-main} by setting the $f$ in the theorem equal to $O(kf)$.  This implies Theorem~\ref{thm:main-polytime}.  

\subsection{Proof Preliminaries} \label{sec:simplifications}

Our proof will be a counting argument over a special type of path in the graph $H$:
\begin{definition} [Alternating Paths]
An \emph{alternating $k$-path} in $H$ is a path $\pi$ with edge sequence $(e_1, e_2, \dots, e_k)$ 
with the following property: every even-numbered edge is heavier than the two odd-numbered edges adjacent to it.
(When $k$ is even, the last edge $e_k$ is only required to be heavier than $e_{k-1}$.)
\end{definition}

More specifically, we will count \emph{simple unblocked} alternating paths.
As usual, a \emph{simple} path is one that does not repeat nodes.
An \emph{unblocked} path is defined as follows:
\begin{definition} [Blocked Paths]
A path $\pi$ is \emph{blocked} by a strong blocking set $B$ if there is $(e_1, e_2) \in B$ with $e_1, e_2 \in \pi$.
Otherwise, $\pi$ is \emph{unblocked}.
(Note: unlike for cycles, $\pi$ is blocked even if neither $e_1$ nor $e_2$ is the \emph{heaviest} edge in $\pi$.)
\end{definition}

In the remainder of this section we follow the outline discussed in Section~\ref{sec:overview}, giving a counting lemma and a dispersion lemma for simple unblocked alternating $k$-paths.  But we first start by making some useful simplifications, which are all without loss of generality:

\begin{enumerate}
    \item We will use the stronger hypothesis that in the strong blocking set $B$, each edge appears in $\le f$ blocks.
This is without loss of generality for the following reason.
Since every block contains two edges, the average edge participates in $\le 2f$ blocks, and hence a simple counting argument (or Markov's inequality) implies that at most half of the edges participate in at least $4f$ blocks.
Suppose that, while there is an edge that participates in $\ge 4f$ blocks, we delete that edge from the graph and the corresponding blocks from the blocking set.
We thus delete at most half of the graph edges in this way, and we arrive at a graph with a blocking set in which every block participates in $< 4f$ blocks.
We can then reparametrize $f \gets 4f$ while changing our claimed upper bound on spanner size by only a constant factor.

\item We assume that $H$ has average degree $d \geq 32fk$.
In the regime where this does not hold, $H$ has $nd/2 = O(nfk)$ total edges and so the claimed bounds hold already due to the trailing additive term.

\item Additionally, letting $d$ be the average degree in $H$, we will assume that the maximum degree is $O(d)$.
This is justified by the following operation: while there is a node $v$ of degree $\ge 4d$, split it into two nodes $v_1, v_2$, equitably partitioning the edges that use $v$ between its two copies.
Notice that this splitting operation does not create cycles or edges (though it may destroy some cycles), 
and thus $B$ is still a strong blocking set for the modified graph.
A quick counting argument, standard in prior work (for example, \cite{BDPW18}), shows that one introduces only $O(n)$ new nodes in this way, and hence this again changes our claimed upper bounds on spanner size only by a constant factor.

\item We assume that all edges in $H$ have distinct weights, so that we may unambiguously refer to \emph{the heaviest} or \emph{lightest} edge among a set of edges.
This is justified by simply breaking weight ties in some arbitrary but consistent way, as in our discussion of the FT-greedy algorithm.
\end{enumerate}

\subsection{Counting Lemma for Alternating Paths}

First, we prove a counting lemma for alternating paths.
We will start with some weaker versions of the counting lemma we need, and then gradually bootstrap them into a full one.
In the following lemma statements, an \emph{edge-simple alternating path} is an alternating path that does not repeat edges (but which might repeat nodes).

\begin{lemma} [Weak Counting Lemma] \label{lem:weakcounting}
For any positive integer $k$, any $n$-node graph $\Gamma$ with $\ge kn$ edges has an edge-simple alternating $k$-paths.
\end{lemma}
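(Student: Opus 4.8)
The plan is to prove the Weak Counting Lemma by a greedy/extremal argument that repeatedly extracts a long alternating path from a graph that cannot be too sparse. The cleanest route I see is induction on $k$, building an alternating $(k+1)$-path from an alternating $k$-path, but the subtlety is that the alternating condition couples consecutive edges, so one must be careful about which endpoint to extend and whether the newly added edge plays an ``even'' or ``odd'' role. A more robust alternative — the one I would actually pursue — is a direct argument: consider a maximal edge-simple alternating path $\pi = (e_1, \dots, e_\ell)$ in $\Gamma$ and show that if $\ell < k$ then we have not yet used up ``enough'' of the graph, contradicting the edge count $\ge kn$.

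\paragraph{Approach via maximal alternating paths.} First I would set up the right notion of ``extendable.'' Given an edge-simple alternating path ending at a vertex $v$ via edge $e_\ell$, whether we can append an edge $e_{\ell+1}$ at $v$ depends on the parity of $\ell+1$: if $\ell+1$ is even we need $e_{\ell+1}$ heavier than $e_\ell$ (and, once we later extend again, heavier than the following odd edge), while if $\ell+1$ is odd we need $e_{\ell+1}$ lighter than $e_\ell$. The key observation is that the binding constraint for extension is a comparison with the single edge $e_\ell$ at the current endpoint — an edge incident to $v$ can fail to extend the path only if (a) it is already on $\pi$, or (b) it violates the one weight inequality against $e_\ell$. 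I would argue that the number of edges incident to $v$ that are ``on $\pi$'' is at most $\ell \le k$, and then handle the weight obstruction by always choosing, at the growth endpoint, the locally extremal available edge (heaviest when we need an increase, lightest when we need a decrease), so that the weight inequality is automatically satisfiable unless essentially all incident edges are blocked for the structural reason. Summing the ``lost'' edges over the at most $k$ vertices of a short path gives only $O(kn)$ — not quite matching, so I would need to be more careful with constants, likely by a potential/charging argument that charges each vertex of $\Gamma$ for the edges it can ``kill'' at most once.

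\paragraph{Cleaner inductive formulation.} Alternatively, and probably what the authors do, I would prove by induction on $k$ the statement: every $n$-node graph with $\ge kn$ edges contains an edge-simple alternating $k$-path, \emph{and moreover} one can control the ``exposed'' edge at a chosen endpoint. Concretely, strengthen to: if $\Gamma$ has $> (k-1)n$ edges then it has an edge-simple alternating $k$-path whose first edge $e_1$ can be taken to be the lightest edge at its starting vertex (or some similar extremal control). The base case $k=1$ is immediate since $> 0$ edges means an edge exists. For the inductive step, take the subgraph $\Gamma'$ obtained by deleting, say, one well-chosen edge per vertex (the lightest or heaviest incident edge, matching the parity needed), so that $|E(\Gamma')| > (k-1)n - \text{(correction)}$; apply induction to get an alternating $(k-1)$- or $(k-2)$-path, then prepend the deleted extremal edges at the ends to restore the alternating pattern two edges at a time. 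The arithmetic of how many edges we must delete per vertex (one? two?) is what determines whether the threshold comes out as exactly $kn$, and getting that bookkeeping tight is the main obstacle.

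\paragraph{Main obstacle.} The genuinely delicate point is the interaction between the parity structure of ``alternating'' and the extension step: appending a single edge shifts every later edge's parity role, and at the even-indexed edges we need a two-sided inequality (heavier than \emph{both} neighbors), so a naive one-edge-at-a-time induction does not preserve the invariant. I expect the resolution is to grow the path two edges at a time (one ``down'' edge then one ``up'' edge, forming a valid alternating block), using a clean extremal-choice rule at the endpoint to guarantee the weight comparisons, and to track edges consumed via a charging scheme that loses only one edge per vertex, yielding the sharp $kn$ bound. I would also keep in mind the allowance that the path may repeat vertices (only edges must be distinct), which is precisely what makes the $kn$ count achievable without a girth hypothesis — so I must \emph{not} accidentally prove vertex-simplicity here, as that is deferred to a later lemma.
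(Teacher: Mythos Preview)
Your second approach---delete one extremal edge per vertex, apply induction to the remaining graph to get an alternating $(k-1)$-path, then extend by one edge---is exactly what the paper does, and it works one edge at a time; the two-edge-at-a-time complication you anticipate is unnecessary.

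The ``main obstacle'' you flag is not actually an obstacle. Appending an edge at the \emph{end} of the path does not shift any edge's parity role: the indices $1,\dots,k-1$ stay put and only a new edge $e_k$ is added. The only new constraint introduced is the single comparison between $e_k$ and $e_{k-1}$, and that is exactly what the extremal deletion guarantees. Concretely, the paper's induction goes: if $k$ is odd, delete the \emph{lightest} edge at each vertex (at most $n$ edges), apply the inductive hypothesis to the $\ge (k-1)n$ surviving edges to get an alternating $(k-1)$-path ending in an even-position edge $(u,v)$, and then append the deleted lightest edge $(v,x)$ at $v$---it is lighter than $(u,v)$ by choice and is not already on the path since it was removed before the inductive call. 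The even case is symmetric, deleting the heaviest edge at each vertex. The ``two-sided'' inequality at even edges never bites, because the side facing the interior of the path was already satisfied by the inductive hypothesis, and the side facing the new edge is handled by the extremal choice. Your worry about parity shifts would be real only if you \emph{prepended} at the start; appending at the end avoids it entirely.
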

\begin{proof}
The proof is by induction on $k$.
For the base case, when $k=0$, any individual node may be viewed as an edge-simple alternating $0$-path.

Now we prove the inductive step.
We will first assume that $k$ is odd.
Assume without loss of generality that $H$ has no isolated nodes, and preprocess $\Gamma$ by removing the lightest edge incident to each node.
We remove at most $n$ edges in this way, so at least $(k-1)n$ edges remain.
By the inductive hypothesis, there exists an edge-simple alternating $k-1$ path $\pi$ in the surviving graph.
Let $(u, v)$ be the last edge of $\pi$.
Since $k-1$ is even, we can extend $\pi$ into an alternating $k$-path by appending any edge incident to $v$ that is lighter than $(u, v)$.
Letting $(v, x)$ be the edge removed in the preprocessing, we know that $(v, x)$ is lighter than $(u, v)$, since it was selected to be removed instead of $(u, v)$.
Thus we can extend $\pi$ into an alternating $k$-path by appending $(v, x)$, and since $\pi$ was generated in a graph that did not contain $(v, x)$, it remains edge-simple under this extension.

The case where $k$ is even is similar, except that we remove the \emph{heaviest} edge incident to each node in the preprocessing.
\end{proof}

We next bootstrap into a stronger counting lemma over the same kinds of paths:

\begin{lemma} [Intermediate Counting Lemma] \label{lem:counting}
For any positive integer $k$, any $n$-node graph $\Gamma$ with $\ge 2kn$ edges has $\ge kn$ total edge-simple alternating $k$-paths.
\end{lemma}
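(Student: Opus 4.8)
The plan is to prove the Intermediate Counting Lemma by a bootstrapping argument that leverages the Weak Counting Lemma as a black box. The idea is standard in extremal arguments of this type (cf. the classical Moore-bound counting lemma): we want to upgrade ``there exists at least one path'' into ``there exist many paths.'' First I would observe that it suffices to produce, for many distinct ``starting configurations,'' a distinct edge-simple alternating $k$-path, and count these configurations. Concretely, since $\Gamma$ has $\ge 2kn$ edges, if we repeatedly delete an edge-simple alternating $k$-path from $\Gamma$, each deletion removes exactly $k$ edges, so we can perform at least $2n - $ (a few) such deletions as long as the surviving graph still has $\ge kn$ edges; once it drops below $kn$ we stop. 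After $t$ deletions the surviving graph has $\ge 2kn - tk$ edges, which is $\ge kn$ as long as $t \le n$. Hence we can extract at least $n$ edge-disjoint edge-simple alternating $k$-paths. But ``$\ge n$'' is not ``$\ge kn$,'' so a naive edge-disjoint extraction is not strong enough, and I would instead count more cleverly.

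The cleaner route: consider the ``threshold'' version of the argument. For each node $v$, the Weak Counting Lemma (applied after preprocessing as in its proof) effectively shows that as long as the average degree stays above $2k$, there is an alternating $k$-path, and in fact the inductive structure of that proof associates to each choice of a final edge (the one removed in preprocessing) a way to extend. So I would re-run the induction from the Weak Counting Lemma but tracking a count rather than mere existence: claim that any $n$-node graph with $\ge 2kn$ edges has $\ge kn$ edge-simple alternating $k$-paths, by induction on $k$. For the inductive step with $k$ odd, preprocess by removing the lightest edge at each node (removing $\le n$ edges, leaving $\ge 2kn - n \ge 2(k-1)n + (2n - n) = 2(k-1)n + n \ge 2(k-1) \cdot n$... here I need to be careful: $2kn - n \ge 2(k-1)n$ iff $2n \ge n$, true, so the surviving graph has $\ge 2(k-1)n$ edges). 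Apply the inductive hypothesis to get $\ge (k-1)n$ edge-simple alternating $(k-1)$-paths in the survivor. Each such path $\pi$ ends at some node $v$ with last edge $(u,v)$ heavier (since $k-1$ even) than the removed edge $(v,x)$ at $v$; appending $(v,x)$ yields an edge-simple alternating $k$-path, since $(v,x)$ was deleted in preprocessing and hence not in $\pi$. The map $\pi \mapsto \pi + (v,x)$ is injective (delete the last edge to recover $\pi$), so we get $\ge (k-1)n$ alternating $k$-paths. The even case is symmetric, removing the heaviest edge at each node.

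The main obstacle is the bookkeeping gap between $(k-1)n$ and $kn$: the induction as sketched above only delivers $\ge (k-1)n$ paths, not $\ge kn$. To close this I would either (a) strengthen the hypothesis of the induction to carry a better constant — e.g. prove ``$\ge (c k)n$ paths from $\ge 2kn$ edges'' is too weak since the loss is additive per level, not multiplicative — or more promisingly (b) not remove one edge per node but observe we have slack: starting from $2kn$ edges we only spent $kn$ edges worth of ``budget'' over the $k$ preprocessing rounds (removing $\le n$ edges each round, total $\le kn \le$ half the edges), so at every level the surviving graph has at least as many edges as needed and in fact we can afford to iterate the whole extraction $k$ times on disjoint halves, or re-inject counting multiplicity by noting each $(k-1)$-path extends in potentially more than one way when a node has several light incident edges. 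I expect the right fix is to track, in the induction, the quantity ``(number of edge-simple alternating $j$-paths) $\ge$ (number of surviving edges) $-$ (slack)'' and propagate it; since we lose only $n$ edges per preprocessing round and start with $kn$ edges of slack above the $kn$ threshold needed at the base, exactly $kn$ paths survive to the top. I would write the induction with the invariant stated as ``$\ge 2jn$ edges implies $\ge jn$ edge-simple alternating $j$-paths, and moreover these can be taken so that the map dropping the last edge is injective onto a family of $(j-1)$-paths,'' making the extension step bookkeeping-clean; verifying the arithmetic $2jn - n \ge 2(j-1)n$ and $(j-1)n$-to-$jn$ via the unused half of the edge budget is the one routine computation to nail down.
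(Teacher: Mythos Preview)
Your proposal has a genuine gap. The inductive route you settle on --- preprocess (remove one edge per node), apply the inductive hypothesis to get $\ge (k-1)n$ alternating $(k-1)$-paths, then extend each by one removed edge --- only yields $(k-1)n$ alternating $k$-paths, as you yourself notice. None of your proposed repairs actually closes the gap: the ``slack'' observation that you still have $\ge 2(k-1)n$ edges after preprocessing is exactly what the inductive hypothesis needs and no more, so there is no extra budget to spend; the ``extend in more than one way'' idea fails because the proof of the Weak Counting Lemma picks a \emph{specific} extension edge (the one removed in preprocessing) precisely to guarantee edge-simplicity, and an arbitrary light edge at $v$ might already lie on $\pi$; and your final stated invariant is just a restatement of the lemma, with the injectivity of ``drop the last edge'' giving you nothing beyond the $(k-1)n$ count you already had.

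The fix is a one-line change to your \emph{first} idea, the one you abandoned. You proposed deleting an entire alternating $k$-path per iteration, which costs $k$ edges and thus runs for only $\ge n$ rounds. The paper instead deletes a single \emph{edge} per iteration: as long as some edge-simple alternating $k$-path exists, pick any edge on such a path and delete it. Each iteration witnesses a distinct alternating $k$-path in the original graph (distinct because the edge just removed lies on it but on no later-witnessed path). By the Weak Counting Lemma a path exists whenever $\ge kn$ edges remain, so starting from $\ge 2kn$ edges you run for at least $2kn - kn = kn$ iterations, giving $\ge kn$ paths.
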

\begin{proof}
Consider the following process: pick an edge that is contained in at least one edge-simple alternating $k$-path, remove it, and repeat until there are no more edge-simple alternating $k$-paths.
Since we remove the edge we find in each iteration, there are at least as many edge-simple alternating $k$-paths as there are iterations.
And there are at least $kn$ iterations, since as long as $\ge kn$ edges remain in the graph, Lemma~\ref{lem:weakcounting} implies that there will still be an edge-simple alternating $k$-path.
Hence there are at least $kn$ edge-simple alternating $k$-paths in total.
\end{proof}

Now we bootstrap this once again into the counting lemma we will actually use in our main argument.
While the previous two lemmas hold for arbitrary graphs $\Gamma$, the following one holds specifically for the graph $H$ under consideration, which has a small strong blocking set.
\begin{lemma} [Full Counting Lemma]
The graph $H$ has $n \cdot \Omega(d/k)^k$ total unblocked simple alternating $k$-paths, where $d$ is its average degree.
\end{lemma}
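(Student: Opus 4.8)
The plan is to prove the bound by the same ``remove-and-repeat'' bootstrapping used for Lemma~\ref{lem:counting}, but now iterated $k$ times and localized around each node, so that we can count \emph{simple} alternating paths through a fixed node and then multiply by the number of nodes. The starting point is the observation that the simplifications of Section~\ref{sec:simplifications} give us a lot of room: average degree $d \ge 32 f k$, maximum degree $O(d)$, every edge in $\le f$ blocks, and distinct edge weights. In particular, at any node $v$, at most $O(fk)$ of its incident edges can appear in a block together with some other edge incident to $v$ or in one of the ``nearby'' edges we will have committed to; since $d \gg fk$, deleting all such edges still leaves a constant fraction of $v$'s incident edges available.

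First I would set up a localized version of the weak counting lemma: I claim that from any node $v$ with at least $\Omega(d)$ ``good'' incident edges (edges not yet deleted and not participating in any block with the edges committed so far), one can greedily grow a simple unblocked alternating $k$-path rooted at $v$. The growth step mirrors Lemma~\ref{lem:weakcounting}: when the current path has even length, the next edge must be \emph{heavier} than the last edge; when it has odd length, the next edge must be \emph{lighter}. At each node we reach, we have $\Omega(d)$ candidate edges by the max-degree/average-degree gap; among those we need (i) an edge of the correct weight relation --- a constant fraction qualify since weights are distinct, or we can again use the ``remove the lightest/heaviest incident edge'' preprocessing trick of Lemma~\ref{lem:weakcounting} to force it --- (ii) an edge that does not revisit a node of the path (at most $k$ forbidden, negligible since $d \gg k$), and (iii) an edge not forming a block with any of the $\le k$ edges already on the path (at most $kf$ forbidden, negligible since $d \gg kf$). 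So a single such path always exists.

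Next I would bootstrap to many paths, exactly as in Lemma~\ref{lem:counting} but counting edges \emph{incident to each node} rather than globally. Fix a node $v$; run the removal process ``find a simple unblocked alternating $k$-path starting at $v$, delete its first edge, repeat.'' As long as $v$ still has $\Omega(d)$ surviving good incident edges, the localized weak lemma guarantees another path, and each iteration kills exactly one incident edge of $v$, so we get $\Omega(d)$ distinct such paths from $v$ --- but this only gives $\Omega(d)$ per node, hence $\Omega(nd)$ total, which is far short of $n \cdot \Omega(d/k)^k$. To get the full exponential count I would instead recurse on the \emph{path length}: prove by induction on $j$ that $H$ contains $n \cdot \Omega(d/k)^j$ simple unblocked alternating $j$-paths, by taking the collection of $(j-1)$-paths, and for each one extending it by $\Omega(d)$ choices at its last endpoint (again using max-degree $O(d)$, the $\le k$ forbidden revisits, and the $\le kf$ forbidden blocking partners, all dwarfed by $d$, plus the weight-alternation constraint which costs only a constant factor). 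One must be slightly careful that the extensions of distinct $(j-1)$-paths are distinct $j$-paths --- true since the $(j-1)$-prefix is recoverable --- and that the weight-relation step does not lose more than a constant factor per level, which is where the ``$\Omega(d/k)$'' (rather than $\Omega(d)$) in the statement comes from: the $k$ in the denominator absorbs both the per-level constant losses and the ``remove one incident edge per node'' preprocessing.

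The main obstacle I expect is the interaction between the \emph{alternating} constraint and getting a genuinely \emph{multiplicative} $\Omega(d)$ factor at \emph{every} level of the recursion. In Lemma~\ref{lem:weakcounting} the alternation is handled by a global preprocessing step (delete the lightest, resp. heaviest, incident edge at each node), which works because it is done once; here we need the alternation to hold while extending, and a naive ``a constant fraction of incident edges are lighter/heavier than a fixed edge'' argument can fail at a node whose incident edges are all heavier (or all lighter) than the edge we arrived on. The fix is to bake the alternation into the induction itself: maintain the stronger inductive statement that we have many alternating $j$-paths \emph{whose last edge is heavy relative to its neighbor} (for odd $j$) so that the next extension step is the ``append a lighter edge'' step, for which one can always use the edge deleted by a global preprocessing pass, exactly as in Lemma~\ref{lem:weakcounting} --- i.e., run the preprocessing once at the start, then never need it again for the ``down'' steps, and handle the ``up'' steps (append a heavier edge) by noting that the max-degree bound plus distinct weights gives $\Omega(d)$ strictly-heavier incident edges \emph{except} at a bounded number of ``locally maximal'' nodes, which we can afford to avoid since they are few. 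Getting this casework to close cleanly --- and confirming that the $\poly(k)$ loss is only in the base, not the exponent --- is the delicate part; everything else is the routine union-bound-over-forbidden-edges counting that is standard in this literature.
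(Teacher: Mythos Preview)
Your proposal takes a genuinely different route from the paper, and the route has a real gap precisely where you flag it.

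The paper does \emph{not} attempt a direct induction on the path length with $\Omega(d)$ extensions per level. Instead it uses a random-sampling amplification. Sample each edge of $H$ independently with probability $p = \Theta(k/d)$; in the sampled subgraph $H'$ the expected number of edges is $\Theta(nk)$, and the expected number of surviving blocks is $|B|\,p^{2} = O(nfk^{2}/d) = O(nk)$ by the assumption $d \ge 32fk$. Deleting one edge per surviving block yields a block-free subgraph $H''$ with $\ge 2nk$ edges, to which the Intermediate Counting Lemma applies and gives $\ge kn$ edge-simple alternating $k$-paths. In a block-free graph every edge-simple $k$-path is automatically simple (a non-simple edge-simple path contains a cycle of length $\le k$, which would have to be blocked) and automatically unblocked. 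Hence $\mathbb{E}[\alpha(H')] = \Omega(kn)$. But by linearity $\mathbb{E}[\alpha(H')] = \alpha(H)\cdot p^{k}$, so $\alpha(H) \ge \Omega(kn)\,p^{-k} = n \cdot \Omega(d/k)^{k}$. The whole difficulty of alternation is confined to the Intermediate Counting Lemma, which only needs a \emph{linear} number of alternating paths and so can afford the ``append the single deleted lightest/heaviest edge'' trick; the exponential factor comes entirely from undoing the sampling.

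The gap in your approach is the alternation step, and your proposed patch does not close it. For an ``up'' step you need, at the current endpoint $v$, $\Omega(d)$ incident edges strictly heavier than the edge $e$ by which you arrived. Neither the max-degree bound nor distinctness of weights provides this: if $e$ is the heaviest (or nearly heaviest) edge at $v$, there are essentially no heavier options, and this is a property of the \emph{pair} $(e,v)$, not of $v$ alone, so ``avoiding a bounded number of locally maximal nodes'' cannot help. Symmetrically for ``down'' steps. The preprocessing trick from Lemma~\ref{lem:weakcounting} furnishes exactly \emph{one} guaranteed lighter (resp.\ heavier) edge per node, which is enough for existence of a single path but not for a multiplicative $\Omega(d)$ factor per level. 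Making a direct-extension argument work would require something like restricting to paths whose every edge sits in the middle of the weight-ranks at both its endpoints, and then controlling how many such paths there are --- a substantially more delicate global argument than what you sketch, and exactly the complication that the sampling trick is designed to avoid.
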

\begin{proof}
In this proof, we will actually count the unblocked edge-simple alternating $k$-paths in $H$.
We notice that, if an edge-simple $k$-path repeats a node, then it must have a cycle of length $\le k$ as a subpath, and this cycle must be blocked by $B$.
Thus every path $\pi$ that is edge-simple, but not simple, is already blocked, and hence
the number of unblocked edge-simple alternating $k$-paths is equal to the number of unblocked simple alternating $k$-paths.
In the following, let $\alpha(H)$ denote the number of unblocked (edge-)simple alternating $k$-paths in $H$ (and we use similar notation for other graphs that will arise in the argument).

Let $H'$ be a random subgraph of $H$ obtained by including each edge with probability $8(d/k)^{-1}$.
So we have
$$\mathbb{E}\left[|E(H')|\right] = |E(H)| \cdot 8(d/k)^{-1} = \frac{nd}{2} \cdot 8(d/k)^{-1} = 4nk.$$

Since each edge is sampled independently, a standard Chernoff bound implies that with high probability $|E(H')| \geq (15/4) nk$.

Let $B' \subseteq B$ be the strong blocking set for $H'$ obtained by keeping the blocks $(e_1, e_2) \in B$ where both $e_1, e_2$ survive in $H'$.
We have
\begin{align*}
\mathbb{E}\left[|B'|\right] = |B| \cdot 64(d/k)^{-2} &\le |E(H)|f \cdot 64(d/k)^{-2}\\
&= \frac{nd}{2} \cdot f \cdot 64(d/k)^{-2}\\
&= 32 nf k^2 /d\\
&\le nk
\end{align*}
where the last inequality uses our assumption that $d \geq 32fk$.  Hence another standard Chernoff bound implies that $|B'| \leq (5/4)nk$ with high probability. 
Now, build a subgraph $H''$ from $H'$ by considering each block $(e_1, e_2) \in B'$ and deleting $e_1$.
We thus have that with high probability, 
$$
|E(H'') = |E(H')| - |B'| \ge (15/4)nk - (5/4)nk = (5/2)nk.
$$

By Lemma \ref{lem:counting}, a graph on $\ge 2nk$ edges has $\ge kn$ total edge-simple alternating $k$-paths.
By construction \emph{no} $k$-paths in $H''$ are blocked (since we have removed at least one edge from every block originally in $B$).
Thus, we have that $\alpha(H'') \geq kn$ with high probability, and hence $\mathbb{E}[\alpha(H'')] = \Omega(kn)$.
Since $H'' \subseteq H$, this implies that $\mathbb{E}[\alpha(H')] = \Omega(kn)$.

Let us now compute $\mathbb{E}[\alpha(H')]$ in a different way.
Recall that $H'$ was obtained from $H$ by including each edge with probability $O(d/k)^{-1}$.
So each individual edge-simple $k$-path from $H$ survives in $H'$ with probability $O(d/k)^{-k}$.
So we have
$$\mathbb{E}[\alpha(H')] = \alpha(H) \cdot O(d/k)^{-k}.$$
Putting together these two bounds on $\mathbb{E}[\alpha(H')]$, we get
$$\alpha(H) = \Omega(kn) \cdot \Omega(d/k)^k = n \cdot \Omega(d/k)^k,$$
as claimed.
\end{proof}

\subsection{Dispersion Lemma for Alternating Paths}

In this part, we bound the maximum number of alternating paths that can go between two nodes.
First we prove a useful intermediate lemma which applies to all unblocked $k$-paths (not just alternating):

\begin{lemma} \label{lem:chokeedges}
For any nodes $s, t$, there is a set $F_{s,t}$ containing $|F_{s,t}| \le kf+1$ edges such that every simple unblocked $s \leadsto t$ path of length $\le k$ in $H$ uses an edge in $F_{s, t}$ as its heaviest edge.
\end{lemma}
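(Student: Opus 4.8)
The plan is to simply \emph{define} $F_{s,t}$ to be the set of all edges that occur as the heaviest edge of some simple unblocked $s \leadsto t$ path of length at most $k$ in $H$. With this definition the stated covering property is immediate (a path of length $0$, which can only occur if $s=t$, has no heaviest edge and should be read out of the statement), so the entire content of the lemma is the cardinality bound $|F_{s,t}| \le kf+1$. Throughout I will use the simplifications from Section~\ref{sec:simplifications}: every edge participates in at most $f$ blocks of $B$, and all edge weights are distinct, so that ``the heaviest edge'' of a finite set of edges is unambiguous.

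The heart of the argument is the following claim. \emph{Let $\pi$ and $\pi'$ be simple unblocked $s \leadsto t$ paths, each of length at most $k$, and let $e, e'$ be their respective heaviest edges; if $w(e') > w(e)$, then $e'$ shares a block of $B$ with some edge of $\pi$.} To prove it, note first that $e' \notin \pi$, since $e'$ is strictly heavier than every edge of $\pi$. Writing $e' = (u,v)$ with $u$ nearer to $s$ along $\pi'$, the vertices $u$ and $v$ remain connected in $(\pi \cup \pi') \setminus \{e'\}$ via the concatenation ``$u \leadsto s$ along $\pi'$, then $s \leadsto t$ along $\pi$, then $t \leadsto v$ along $\pi'$''; taking a simple $u$--$v$ path inside $(\pi \cup \pi') \setminus \{e'\}$ and adding $e'$ yields a simple cycle $C$ with $e' \in C$ and $|C| \le |\pi| + |\pi'| \le 2k$. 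Every edge of $C$ other than $e'$ lies on $\pi$ (weight $\le w(e)$) or on $\pi'$ (weight $< w(e')$), so $e'$ is the heaviest edge of $C$. Applying the strong $2k$-blocking property of $B$ to $C$ produces a block $(e', b) \in B$ with $b \in C$ and $b \neq e'$ (the block must contain the heaviest edge of $C$, which is $e'$). Since $b \in C \subseteq \pi \cup \pi'$, it lies on $\pi$ or on $\pi'$; if $b \in \pi'$ then $\pi'$ contains both edges of the block $(e',b)$, contradicting that $\pi'$ is unblocked. Hence $b \in \pi$, as claimed.

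With the claim in hand, the bound follows quickly. If there are no simple unblocked $s \leadsto t$ paths of length $\le k$ then $F_{s,t} = \emptyset$; otherwise, fix such a path $\pi_0$ whose heaviest edge $e_0$ is as light as possible. For any other $e' \in F_{s,t}$, realized as the heaviest edge of some such path $\pi'$, minimality and distinct weights give $w(e_0) < w(e')$, so the claim applies with $\pi = \pi_0$ and shows that $e'$ shares a block with some edge of $\pi_0$. Thus $F_{s,t} \setminus \{e_0\}$ is contained in the set of edges sharing a block with an edge of $\pi_0$. As $\pi_0$ has at most $k$ edges and each of them lies in at most $f$ blocks, that set has size at most $kf$, giving $|F_{s,t}| \le kf + 1$.

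The only delicate point — and the reason \emph{strong} blocking sets (rather than the ordinary blocking sets of \cite{BP19}) are exactly what is needed here — is the step where we invoke the blocking property on the cycle $C$: we must obtain a block that contains $e'$ \emph{itself}, since it is precisely this that forces the partner edge $b$ to belong to $\pi$ and not to $\pi'$. Ordinary blocking would only give some pair of edges of $C$, which could both come from $\pi'$, and the argument would collapse. Beyond this, the remaining work is routine: verifying that $C$ is a genuine simple cycle of length at most $2k$ and that $e'$ is unambiguously its heaviest edge, both of which are immediate from the construction together with the distinct-weights assumption.
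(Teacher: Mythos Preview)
Your proof is correct and follows essentially the same approach as the paper. The paper builds $F_{s,t}$ greedily from heaviest to lightest and derives a contradiction after $kf+1$ rounds by comparing each collected edge to one remaining ``base'' path $\pi$; your direct definition of $F_{s,t}$ together with the choice of $\pi_0$ having the lightest heaviest edge amounts to the same cycle-plus-strong-blocking-plus-pigeonhole argument, just without the iterative framing.
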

\begin{proof}
We build $F_{s, t}$ iteratively as follows.
Initially $F_{s, t} := \emptyset$.
In each round, pick the heaviest edge contained in any simple unblocked $s \leadsto t$ path of length $\le k$ whose heaviest edge is not already in $F_{s, t}$, and add this edge to $F_{s, t}$.
Halt once no such path exists.

To get the size bound on $F_{s, t}$, suppose for contradiction that after $kf+1$ rounds there remains a simple unblocked $s \leadsto t$ path $\pi$ of length $\le k$ whose heaviest edge is not already in $F_{s,t}$.
For each $f \in F_{s, t}$, by construction there is a simple unblocked $s \leadsto t$ path $\pi'$ of length $\le k$ that uses $f$ as its heaviest edge, and $f$ is heavier than any edge in $\pi$.
Thus $\pi \cup \pi'$ contains a cycle $C$ of length $\le 2k$, which has $f$ as its heaviest edge.  
This cycle $C$ must be blocked by $B$, so there is $(f, e) \in B$ with $f, e \in C$.
Since the path $\pi'$ is unblocked, and $f \in \pi'$, it must be that $e \in \pi$.

But there are $\le k$ edges in $\pi$, and there are $kf+1$ edges in $F_{s, t}$, so by the pigeonhole principle one edge in $\pi$ appears in $\ge f+1$ blocks.
This contradicts our initial assumption from Section \ref{sec:simplifications} that each edge appears in $\le f$ total blocks in $B$.
\end{proof}

\begin{lemma} [Dispersion Lemma] \label{lem:dispersion}
For any nodes $s, t$ and positive integer $j \le k$, the number of simple unblocked alternating $s \leadsto t$ paths of length $j$ in $H$ is
$$\begin{cases}
O\left(k^2 f\right)^{(j-1)/2} & j \text{ is odd}\\
O\left(k^2 f\right)^{j/2} & j \text{ is even} \end{cases}$$
\end{lemma}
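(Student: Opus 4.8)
The plan is to prove the bound by strong induction on the path length $j$, splitting each path over its heaviest edge and recursing on the two resulting pieces. It is convenient to let $P(j)$ denote the maximum, over all ordered pairs of nodes $(s,t)$, of the number of simple unblocked alternating $s\leadsto t$ paths of length exactly $j$ in $H$; I will show $P(j)=O(k^2 f)^{\lfloor j/2\rfloor}$, which is exactly the claimed bound since $\lfloor j/2\rfloor=(j-1)/2$ when $j$ is odd and $\lfloor j/2\rfloor=j/2$ when $j$ is even. The base cases $j=0$ and $j=1$ are trivial: $P(0)=1$ (only the empty path, and only when $s=t$), and $P(1)\le 1$ since there is at most one edge between $s$ and $t$.

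The observation that makes the induction work is that \emph{in an alternating $j$-path with $j\ge 2$, the heaviest edge always occupies an even position}. Indeed, every odd-position edge $e_i$ has a neighbor at an even position ($e_2$ if $i=1$; $e_{j-1}$ if $i=j$ and $j$ is odd; both $e_{i-1}$ and $e_{i+1}$ if $1<i<j$), and by the definition of an alternating path every even-position edge is heavier than its odd-position neighbors — here I use the simplification that all edge weights are distinct, so ``heavier'' is a strict total order — hence $e_i$ cannot be heaviest. Now fix a simple unblocked alternating $s\leadsto t$ path $\pi=(e_1,\dots,e_j)$ with $j\ge 2$. By Lemma~\ref{lem:chokeedges} its heaviest edge $e^\star=e_i$ lies in $F_{s,t}$, a set of at most $kf+1$ edges, and by the observation $i$ is even. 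Removing $e^\star$ splits $\pi$ into a prefix $\pi_1$ from $s$ to one endpoint $u$ of $e^\star$ (length $i-1$) and a suffix $\pi_2$ from the other endpoint $v$ to $t$ (length $j-i$). Both $\pi_1$ and $\pi_2$ are simple and unblocked since they are sub-paths of $\pi$, and re-indexing their edge sequences and reading off the weight constraints inherited from $\pi$ shows that each is an alternating path of its own length; the only place this needs care is the special treatment of the last edge when a length happens to be even, which I would check case by case.

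This yields the recursion $P(j)\ \le\ \sum_{i\ \text{even},\ 2\le i\le j}\ 2(kf+1)\cdot P(i-1)\cdot P(j-i)$, where the factor $2(kf+1)$ counts the choice of the (oriented) edge $e^\star\in F_{s,t}$, and $P(i-1)$ and $P(j-i)$ bound the numbers of possible prefixes and suffixes; the map from a valid $\pi$ to the tuple (heaviest edge, its orientation, its position, prefix, suffix) is injective, so this is a legitimate upper bound. To solve the recursion, substitute the inductive hypothesis and note that since $i$ is even we have $\lfloor (i-1)/2\rfloor=i/2-1$, and a short parity check gives $\lfloor (i-1)/2\rfloor+\lfloor (j-i)/2\rfloor=\lfloor j/2\rfloor-1$ whether $j$ is even or odd; thus each term of the sum is $O(k^2 f)^{\lfloor j/2\rfloor-1}$. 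There are at most $\lfloor k/2\rfloor$ terms (this is where the hypothesis $j\le k$ is used), so $P(j)\le \lfloor k/2\rfloor\cdot 2(kf+1)\cdot O(k^2 f)^{\lfloor j/2\rfloor-1}=O(k^2 f)^{\lfloor j/2\rfloor}$, completing the induction.

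I expect the genuinely new idea — that the heaviest edge of an alternating path is forced into an even (``interior'') position, which is precisely what makes the split into $\pi_1, e^\star, \pi_2$ useful for the recursion — to be the easy part once one thinks of it. The main obstacle is really careful bookkeeping: verifying rigorously that the prefix and suffix are again alternating, especially at their endpoints where the definition singles out the last edge of an even-length path, and tracking the floor functions so that the exponents add up to exactly $\lfloor j/2\rfloor-1$ in both parity cases rather than being off by one in, say, the even case. If those details check out, the counting argument above is otherwise routine.
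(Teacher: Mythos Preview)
Your proposal is correct and follows essentially the same approach as the paper: strong induction on $j$, using Lemma~\ref{lem:chokeedges} to pin the heaviest edge down to one of $O(kf)$ choices, observing that it must sit in an even position of an alternating path, splitting the path there, and recursing on the odd-length prefix and the (parity-of-$j$)-length suffix. Your write-up is slightly more explicit in places---defining $P(j)$ as a maximum over endpoints, writing the recursion as a sum over even positions, and unifying the parity cases via $\lfloor j/2\rfloor$---but the underlying argument and the exponent bookkeeping are identical to the paper's.
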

\begin{proof}
The proof is by strong induction over $j$.
The base cases are when $j=0$, in which case there is at most one $0$-path (when $s=t$), and also when $j=1$, in which case only a single $(s, t)$ edge can be a $1$-path.

For the inductive step, we apply Lemma \ref{lem:chokeedges} to find a set of $|F_{s, t}| =O(kf)$ edges for which every simple unblocked alternating $s \leadsto t$ path of length $k$ has its heaviest edge in $F_{s, t}$.
The alternating $s \leadsto t$ paths may then be partitioned into $O(k^2 f)$ equivalence classes, by (1) which edge in $F_{s, t}$ is their heaviest, and (2) in which of the $j \le k$ possible positions in the path the heaviest edge occurs.

We now bound the size of each individual equivalence class using the inductive hypothesis.
Consider the equivalence class of simple unblocked alternating paths whose heaviest edge is $(u, v) \in F_{s, t}$, and which use $(u, v)$ in the $i^{th}$ position.
Notice that each such path may be viewed as the concatenation of a simple unblocked $s \leadsto u$ alternating path of length $i-1$, and then the edge $(u, v)$, and then a simple unblocked $v \leadsto t$ alternating path of length $j - i$.
So the number of such paths may be upper bounded by the product of the number of simple unblocked $s \leadsto u$ alternating paths of length $i-1$ and the number of simple unblocked alternating $v \leadsto t$ paths of length $j-i$.

We have that $i$ is always even, since the heaviest edge in any alternating path is even-numbered.
So $i-1$ is odd, and using the strong inductive hypothesis, the number of $s \leadsto u$ simple unblocked alternating paths of length $i-1$ is $O(k^2 f)^{(i-2)/2}$.
The parity $j-i$ depends on the parity of $j$, so we need to split into two cases:
\begin{itemize}
    \item if $j$ is even, and so $j-i$ is even, by the strong inductive hypothesis the number of simple unblocked alternating $v \leadsto t$ paths is $O(k^2 f)^{(j-i)/2}$.
    In this case the product is bounded by
    $$O(k^2 f)^{(i-2)/2} \cdot O(k^2 f)^{(j-i)/2} = O\left(k^2 f\right)^{(j-2)/2}.$$
    Since there are $O(k^2 f)$ total equivalence classes, the total number of simple unblocked alternating $s \leadsto t$ $j$-paths is
    $$O(k^2 f) \cdot O\left(k^2 f\right)^{(j-2)/2} = O\left(k^2 f\right)^{j/2},$$
    as claimed.
    
    \item if $j$ is odd, and so $j-i$ is odd, by the strong inductive hypothesis the number of simple unblocked alternating $v \leadsto t$ paths is $O(k^2 f)^{(j-i-1)/2}$.
    In this case the product is bounded by
    $$O(k^2 f)^{(i-2)/2} \cdot O(k^2 f)^{(j-i-1)/2} = O\left(k^2 f\right)^{(j-3)/2}.$$
    Since there are $O(k^2 f)$ total equivalence classes, the total number of simple unblocked alternating $s \leadsto t$ $j$-paths is
    $$O(k^2 f) \cdot O\left(k^2 f\right)^{(j-3)/2} = O\left(k^2 f\right)^{(j-1)/2},$$
    also as claimed. \qedhere
\end{itemize}
\end{proof}

We have stated this dispersion lemma for general $j$, for the sake of its inductive proof, but we will only need to use the setting $j=k$ in the following.

\subsection{Putting it Together}

Now we prove Theorem \ref{thm:tech-main}.
Let $d$ be the average degree of $H$.
By our counting lemma, $H$ has at least $n \cdot \Omega(d/k)^k$ total simple unblocked alternating $k$-paths.
To apply our dispersion lemma, let us first consider the case where $k$ is odd.
Then there are $O\left(k^2 f\right)^{(k-1)/2}$ total simple unblocked alternating $k$-paths between any given pair of nodes in $H$, for an upper bound of $n^2 \cdot O\left(k^2 f\right)^{(k-1)/2}$
on the total number of simple unblocked alternating $k$-paths in the entire graph.
We can put this upper and lower bound together and rearrange as follows:
\begin{align*}
    & n \cdot \Omega(d/k)^k = n^2 \cdot O\left(k^2 f\right)^{(k-1)/2}\\
    \implies & d/k = n^{1/k} O\left(k^2 f\right)^{1/2 - 1/(2k)}\\
    \implies &|E(H)| = \frac{nd}{2} = O\left( k^2 n^{1+1/k} f^{1/2 - 1/(2k)} \right).
\end{align*}

In the case where $k$ is even, we proceed similarly: the total number of simple unblocked alternating $k$-paths in the entire graph is at most
$n^2 \cdot O\left(k^2 f\right)^{k/2},$
and so we put this upper bound together with the same lower bound as before to get:
\begin{align*}
    & n \cdot \Omega(d/k)^k = n^2 \cdot O\left(k^2 f\right)^{k/2}\\
    \implies & d/k = n^{1/k} O\left(k^2 f\right)^{1/2}\\
    \implies & |E(H)| = \frac{nd}{2} = O\left( k^2 n^{1+1/k} f^{1/2} \right).
\end{align*}

The additional additive $+nfk$ term in Theorem \ref{thm:tech-main} is needed to justify the assumption that $d = \Omega(fk)$, which was used to prove our counting lemma.

\bibliography{refs}

\begin{thebibliography}{10}

\bibitem{AlthoferDDJS:93}
Ingo Alth{\"{o}}fer, Gautam Das, David~P. Dobkin, Deborah Joseph, and
  Jos{\'{e}} Soares.
\newblock On sparse spanners of weighted graphs.
\newblock {\em Discrete {\&} Computational Geometry}, 9:81--100, 1993.

\bibitem{awerbuch1991cient}
Baruch Awerbuch, Alan Baratz, and David Peleg.
\newblock Efficient broadcast and light-weight spanners.
\newblock {\em Unpublished manuscript, November}, 1991.

\bibitem{awerbuch1990network}
Baruch Awerbuch and David Peleg.
\newblock Network synchronization with polylogarithmic overhead.
\newblock In {\em Foundations of Computer Science, 1990. Proceedings., 31st
  Annual Symposium on}, pages 514--522. IEEE, 1990.

\bibitem{BEHKSS06}
Georg Baier, Thomas Erlebach, Alexander Hall, Ekkehard K{\"o}hler, Heiko
  Schilling, and Martin Skutella.
\newblock Length-bounded cuts and flows.
\newblock In Michele Bugliesi, Bart Preneel, Vladimiro Sassone, and Ingo
  Wegener, editors, {\em Automata, Languages and Programming}, pages 679--690,
  Berlin, Heidelberg, 2006. Springer Berlin Heidelberg.

\bibitem{BDPW18}
Greg Bodwin, Michael Dinitz, Merav Parter, and Virginia~Vassilevska Williams.
\newblock Optimal vertex fault tolerant spanners (for fixed stretch).
\newblock In Artur Czumaj, editor, {\em Proceedings of the Twenty-Ninth Annual
  {ACM-SIAM} Symposium on Discrete Algorithms, {SODA} 2018, New Orleans, LA,
  USA, January 7-10, 2018}, pages 1884--1900. {SIAM}, 2018.

\bibitem{BDR21}
Greg Bodwin, Michael Dinitz, and Caleb Robelle.
\newblock Optimal vertex fault-tolerant spanners in polynomial time.
\newblock In {\em Proceedings of the Thirty-Second Annual {ACM-SIAM} Symposium
  on Discrete Algorithms, {SODA} 2021}, 2021.

\bibitem{BP19}
Greg Bodwin and Shyamal Patel.
\newblock A trivial yet optimal solution to vertex fault tolerant spanners.
\newblock In {\em Proceedings of the 2019 ACM Symposium on Principles of
  Distributed Computing}, PODC ’19, page 541–543, New York, NY, USA, 2019.
  Association for Computing Machinery.
\newblock \href {https://doi.org/10.1145/3293611.3331588}
  {\path{doi:10.1145/3293611.3331588}}.

\bibitem{ChechikLPR:10}
Shiri Chechik, Michael Langberg, David Peleg, and Liam Roditty.
\newblock Fault tolerant spanners for general graphs.
\newblock {\em {SIAM} J. Comput.}, 39(7):3403--3423, 2010.

\bibitem{czumaj2004fault}
Artur Czumaj and Hairong Zhao.
\newblock Fault-tolerant geometric spanners.
\newblock {\em Discrete \& Computational Geometry}, 32(2):207--230, 2004.

\bibitem{DinitzK:11}
Michael Dinitz and Robert Krauthgamer.
\newblock Fault-tolerant spanners: better and simpler.
\newblock In {\em Proceedings of the 30th Annual {ACM} Symposium on Principles
  of Distributed Computing, {PODC} 2011, San Jose, CA, USA, June 6-8, 2011},
  pages 169--178, 2011.

\bibitem{DR20}
Michael Dinitz and Caleb Robelle.
\newblock Efficient and simple algorithms for fault-tolerant spanners.
\newblock In {\em Proceedings of the 2020 ACM Symposium on Principles of
  Distributed Computing}, PODC ’20, 2020.

\bibitem{elkin2008lower}
Michael Elkin, Yuval Emek, Daniel~A Spielman, and Shang-Hua Teng.
\newblock Lower-stretch spanning trees.
\newblock {\em SIAM Journal on Computing}, 38(2):608--628, 2008.

\bibitem{erdHos1964extremal}
Paul Erd{\H{o}}s.
\newblock Extremal problems in graph theory.
\newblock In {\em In Theory of Graphs and its Applications, Proc. Sympos.
  Smolenice}, 1964.

\bibitem{hoory2002size}
Shlomo Hoory.
\newblock The size of bipartite graphs with a given girth.
\newblock {\em Journal of Combinatorial Theory, Series B}, 86(2):215--220,
  2002.

\bibitem{kapralov2012spectral}
Michael Kapralov and Rina Panigrahy.
\newblock Spectral sparsification via random spanners.
\newblock In {\em Proceedings of the 3rd Innovations in Theoretical Computer
  Science Conference}, pages 393--398. ACM, 2012.

\bibitem{levcopoulos1998efficient}
Christos Levcopoulos, Giri Narasimhan, and Michiel Smid.
\newblock Efficient algorithms for constructing fault-tolerant geometric
  spanners.
\newblock In {\em Proceedings of the thirtieth annual ACM symposium on Theory
  of computing}, pages 186--195. ACM, 1998.

\bibitem{LNS98}
Christos Levcopoulos, Giri Narasimhan, and Michiel Smid.
\newblock Efficient algorithms for constructing fault-tolerant geometric
  spanners.
\newblock In {\em Proceedings of the Thirtieth Annual ACM Symposium on Theory
  of Computing}, pages 186--195. ACM, 1998.

\bibitem{lukovszki1999new}
Tamas Lukovszki.
\newblock New results on fault tolerant geometric spanners.
\newblock {\em Algorithms and Data Structures}, pages 774--774, 1999.

\bibitem{NS07}
Giri Narasimhan and Michiel Smid.
\newblock {\em Geometric Spanner Networks}.
\newblock Cambridge University Press, 2007.
\newblock \href {https://doi.org/10.1017/CBO9780511546884}
  {\path{doi:10.1017/CBO9780511546884}}.

\bibitem{peleg2000distributed}
David Peleg.
\newblock {\em Distributed computing: a locality-sensitive approach}.
\newblock SIAM, 2000.

\bibitem{PelegS:89}
David Peleg and Alejandro~A. Sch{\"{a}}ffer.
\newblock Graph spanners.
\newblock {\em Journal of Graph Theory}, 13(1):99--116, 1989.

\bibitem{PelegU:89}
David Peleg and Jeffrey~D. Ullman.
\newblock An optimal synchronizer for the hypercube.
\newblock {\em {SIAM} J. Comput.}, 18(4):740--747, 1989.

\bibitem{PelegU:89-routing}
David Peleg and Eli Upfal.
\newblock A trade-off between space and efficiency for routing tables.
\newblock {\em J. {ACM}}, 36(3):510--530, 1989.

\bibitem{thorup2005approximate}
Mikkel Thorup and Uri Zwick.
\newblock Approximate distance oracles.
\newblock {\em Journal of the ACM (JACM)}, 52(1):1--24, 2005.

\bibitem{Wilson10}
Robert~A Wilson.
\newblock A simple construction of the ree groups of type 2f4.
\newblock {\em Journal of Algebra}, 323(5):1468--1481, 2010.

\end{thebibliography}

\appendix

\section{Figures \label{app:figures}}

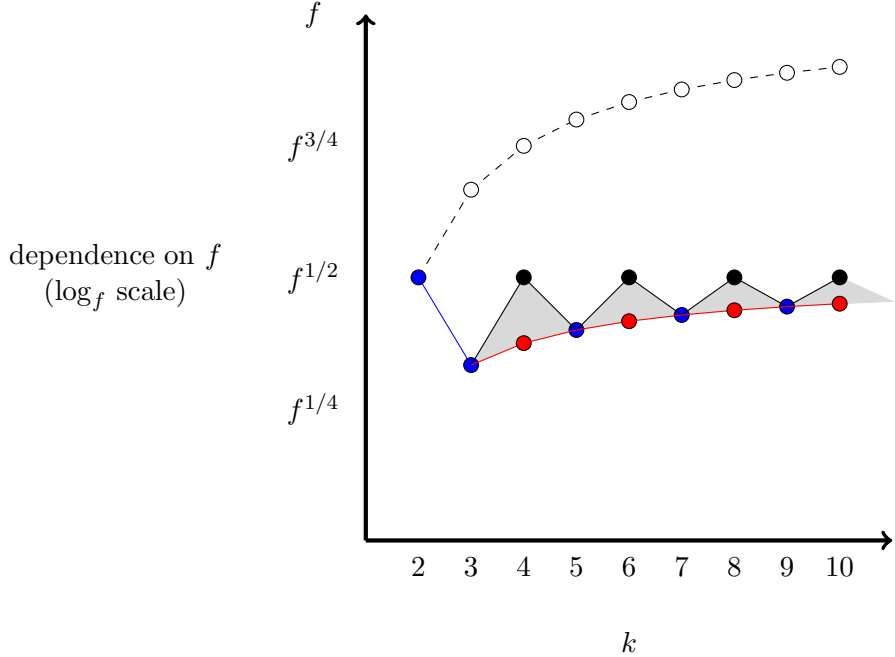
\begin{figure} [h!]
\begin{center}
\begin{tikzpicture} [scale=0.7]
\draw [ultra thick, ->] (0, 0) -- (10, 0) node [midway, below=30pt] {$k$};
\draw [ultra thick, ->, align=center] (0, 0) -- (0, 10) node [midway, left=50pt] {dependence on $f$\\ ($\log_f$ scale)};

\node at (-1, 2.5) {$f^{1/4}$};
\node at (-1, 5) {$f^{1/2}$};
\node at (-1, 7.5) {$f^{3/4}$};
\node at (-1, 10) {$f$};

\foreach \k in {2,...,9}{
    \draw [dashed] ({\k - 1, {10-10/(\k)}}) -- ({\k, {10-10/(\k+1)}});
}
\foreach \k in {2,...,9}{
    \draw [fill=white] ({\k, {10-10/(\k+1)}}) circle [radius=4pt];
}

\draw [fill=blue] (1, 5) circle [radius=4pt];
\draw [blue] (1,5)--(2,{10/3});
\node at (1, -0.5) {$2$};

\foreach \k in {1,...,4}{
    \draw [gray!30, fill=gray!30] ({2*\k}, {5 - 5/(2*\k+1)}) -- ({2*\k + 1}, 5) -- ({2*\k + 2}, {5 - 5/(2*\k+3)}) -- ({2*\k + 1}, {5 - 5/(2*\k+2)}) -- cycle;
}

\foreach \k in {1,...,4}{
    \draw [fill=blue] ({2*\k}, {5 - 5/(2*\k + 1)}) circle [radius=4pt];
    \pgfmathtruncatemacro{\axislabel}{2*\k +1};
    \node at ({2*\k}, -0.5) {\axislabel};
    \draw ({2*\k}, {5 - 5/(2*\k + 1)}) -- ({2*\k + 1}, 5);
    \draw [red] ({2*\k}, {5 - 5/(2*\k + 1)}) -- ({2*\k + 1}, {5 - 5/(2*\k+2)});
}

\foreach \k in {2,...,5}{
    \draw [fill=black] ({2*\k - 1}, 5) circle [radius=4pt];
    \draw [fill=red] ({2*\k - 1}, {5 - 5/(2*\k)}) circle [radius=4pt];
    \pgfmathtruncatemacro{\axislabel}{2*\k};
    \node at ({2*\k - 1}, -0.5) {\axislabel};
}

\foreach \k in {2,...,4}{
    \draw ({2*\k - 1}, 5) -- ({2*\k}, {5 - 5/(2*\k + 1)});
}
\foreach \k in {2,...,4}{
    \draw [red] ({2*\k - 1}, {5 - 5/(2*\k)}) -- ({2*\k}, {5 - 5/(2*\k + 1)});
}

\end{tikzpicture}
\end{center}
\caption{Dependence of our new EFT upper bounds on $f$: $f^{1/2}$ for even $k$ and $f^{1/2 - 1/(2k)}$ for odd $k$.  Blue points are where our $f$ dependence is tight, and black/red points represent non-matching upper/lower bounds.  The shaded regions represent knowledge gaps, and the white points are the previous best $f$ dependence ($f^{1-1/k}$) that was known before this paper, in all of \cite{BDPW18, BP19, DR20, BDR21}.}
\label{fig:results}
\end{figure}

\section{Unconditional Lower Bounds for EFT Spanners \label{app:lbs}}

In this section, we point out an extension to the lower bound of \cite{BDPW18} for EFT spanners:
\begin{theorem} \label{thm:uncondeftlb}
For $k=7$ and $f=\Omega(n^{1/11})$, there is an infinite family of $n$-node graphs for which any $f$-EFT $(2k-1 = 13)$-spanner has $\Omega\left( f^{3/7} n^{8/7} \right)$ edges.
\end{theorem}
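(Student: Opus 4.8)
The plan is to extend the lower bound of \cite{BDPW18}. Recall that their EFT lower bound rests on a single clean ingredient — an explicit dense graph of girth $>2k$ — which is fed into a ``fault tolerance'' gadget and then shown, via a girth argument, to force every $f$-EFT $(2k-1)$-spanner of the resulting graph to be dense. The girth conjecture enters \emph{only} in supplying the ingredient: under the conjecture one may take an $N$-vertex graph of girth $>2k$ with $\Theta(N^{1+1/k})$ edges, which is what makes the final bound $\Omega(f^{1/2-1/(2k)}n^{1+1/k})$. So the only thing needed to make the $k=7$ case unconditional is an explicit dense graph of girth $>14$.

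The crucial observation is that thick generalized octagons exist unconditionally (the Ree--Tits octagons of order $(q,q^2)$, $q=2^{2e+1}$): their incidence graphs are bipartite, have girth exactly $16>14$, and can be taken on $N\asymp q^{11}$ vertices with $\asymp q^{12}$ edges, i.e.\ density $\Theta(N^{1+1/11})$. (Equivalently one could use the Lazebnik--Ustimenko--Woldar graph $D(11,q)$, which has girth $\ge 16$ and the same density.) This density is below the $N^{1+1/7}$ that girth $>14$ allows in principle — it is in fact the Moore-type density of a graph of girth $>22$ — which is why plugging it into the \cite{BDPW18} gadget yields a bound matching Theorem~\ref{thm:eftlb} only once $f$ is large enough for the gadget to ``amplify'' the sparser base graph. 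Comparing the bound the octagon produces with the target $f^{3/7}n^{8/7}$ shows the crossover is exactly at $f=\Theta(n^{1/11})$, which is precisely the hypothesis of the theorem; the matching exponent ``$11$'' is forced by the combinatorics of generalized octagons.

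Concretely I would: (1) extract from \cite{BDPW18} their $f$-EFT lower bound as a function of the base graph's vertex count, edge count, and girth, rather than in the already-specialized form of Theorem~\ref{thm:eftlb}; (2) record the parameters of the generalized octagon incidence graph — girth $16$, $\Theta(q^{11})$ vertices, $\Theta(q^{12})$ edges, biregular with degrees $q+1$ and $q^2+1$ — and note its density $\Theta(N^{1+1/11})$; (3) substitute it for the conjecturally optimal graph, taking care that \cite{BDPW18} is written around a (nearly) regular base graph whereas the octagon is unbalanced — the cleanest fix is to split the high-degree vertices so the graph becomes near-regular while preserving girth and changing the edge count by only a constant factor, exactly the operation already used in Section~\ref{sec:simplifications}; (4) carry out the arithmetic comparing the resulting bound with $f^{3/7}n^{8/7}$ to pin down the threshold $f=\Omega(n^{1/11})$, and exhibit the infinite family of instances (indexed by $q$, with $n$ and $f$ scaling accordingly).

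I expect step (3) to be the main obstacle: reconciling the \cite{BDPW18} construction and its girth/counting analysis, which rely on uniform behavior of a near-regular base graph, with the genuinely unbalanced bipartite structure of the octagon (its two sides differ by a factor $q$) — one must regularize without losing more than constant factors in density and without dropping the girth below $15$. The rest is essentially bookkeeping, but one must be careful that the threshold is exactly $f=\Omega(n^{1/11})$: the value $1/11$ coming from the octagon's density exponent $1+1/11$ has to line up precisely against the target exponents $8/7$ and $3/7$, which is the arithmetic heart of why this works for $k=7$ but not (yet) for other odd $k$.
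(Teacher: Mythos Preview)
Your identification of the Ree--Tits octagon as the unconditional girth-$16$ ingredient is exactly right, and steps (1), (2), (4) match the paper's argument. The gap is in step (3): regularizing by \emph{splitting} (the edge-partitioning operation of Section~\ref{sec:simplifications}) is the wrong fix, and it does not yield the claimed bound.

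Here is the arithmetic. After splitting each line-vertex into $\Theta(q)$ pieces you obtain a near-regular bipartite graph on $N_0 = \Theta(q^{11})$ vertices with $M_0 = \Theta(q^{12}) = \Theta(N_0^{12/11})$ edges and girth $\ge 16$. Feeding this into the \cite{BDPW18} gadget (uniform blowup by $\sqrt{f}$ on each side, so that every original edge acquires $f$ copies) produces an instance on $N = \Theta(N_0\sqrt{f})$ vertices with $\Theta(M_0 f)$ edges, i.e.\ $\Theta(N^{12/11} f^{5/11})$ edges. This is strictly weaker than the target $\Omega(N^{8/7} f^{3/7})$ for all $f \le N^2$; in particular it does not recover Theorem~\ref{thm:eftlb} for any $f$ in the range $\Omega(n^{1/11})$. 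The same happens if you use $D(11,q)$: it is already regular with density $N_0^{1+1/11}$, and you land on the same exponents $12/11$ and $5/11$.

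What the paper does instead is exploit the imbalance rather than erase it. It blows up the two sides of the octagon by \emph{different} factors $a,b$ chosen so that (i) the product $ab$ equals $f$ (so each original edge has exactly $f$ copies, which is what the fault argument needs), and (ii) the two sides end up with the same vertex count. Because the line side starts smaller by a factor $q$, one can take $a = q\cdot b$; solving gives both sides of size $\Theta(q^{21/2}\sqrt{f})$, which is a factor $q^{1/2}$ \emph{smaller} than your $\Theta(q^{11}\sqrt{f})$, with the same edge count $\Theta(q^{12} f)$. That density gain is precisely what turns the exponents into $8/7$ and $3/7$. The constraint $b \ge 1$ is what forces $f \ge q = \Theta(n^{1/11})$, explaining the threshold. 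So the ``main obstacle'' you anticipated is real, but splitting is not the way past it; asymmetric blowup is.
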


This is the same bound as in Theorem \ref{thm:eftlb}, but without the need to condition on the girth conjecture despite the fact that the girth conjecture is still unproved for the $k=7$ case.
Our construction is based on an object in incidence geometry called a Ree-Tits Octagon (see \cite{Wilson10} for discussion and a relatively simple construction).
This is a kind of incidence structure, meaning it is a collection of subsets (called ``lines'') of the universe $[n]$ (called ``points'').
For integers $k \ge 2$, let us say that a $k$-gon is a circularly-ordered collection of $k$ pairwise-distinct points $(v_0, v_1, \dots, v_k=v_0)$ such that, for any two adjacent points $v_i, v_{i+1}$, there exists a line that contains both points.
The properties are:

\begin{theorem} [Ree-Tits Octagon, see e.g.\ \cite{Wilson10}]
For infinitely many values of $n$, there is an incidence structure consisting of $\Theta(n^{10/11})$ lines over the points $[n]$ such that:
\begin{itemize}
\item The structure has no $k$-gon for any $2 \le k \le 7$,
\item Every point is contained in $\Theta(n^{1/11})$ lines, and
\item Every line contains $\Theta(n^{2/11})$ points.
\end{itemize}
\end{theorem}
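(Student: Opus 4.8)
The plan is to recognize the claimed object as (the point-line dual of) a finite \emph{generalized octagon} and to assemble the three bullets from the standard parameter formulas for generalized polygons together with the existence of the Ree--Tits octagons. Recall that a generalized octagon of order $(s,t)$ is an incidence structure whose bipartite incidence graph --- points on one side, lines on the other, edges recording incidence --- has diameter $8$ and girth exactly $16$, in which every line is incident to $s+1$ points and every point lies on $t+1$ lines. Tits' construction of the Moufang octagons, realized inside the twisted Chevalley groups ${}^2F_4(q)$ with $q = 2^{2e+1}$ (see \cite{Wilson10} for a self-contained treatment), produces such an octagon of order $(q, q^2)$ for every $e \ge 0$; I would pass to its point-line dual, which has order $(s,t) = (q^2, q)$.

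First I would read off the global counts. For a generalized octagon of order $(s,t)$ the number of points is $(1+s)(1 + st)(1 + s^2 t^2)$ and the number of lines is $(1+t)(1 + st)(1 + s^2 t^2)$; both follow from the usual breadth-first count in the incidence graph (which is tree-like out to distance $6$ from any vertex, and the denominators $st - 1$ and $s^2 t^2 - 1$ that appear when summing the geometric series are nonzero since $s, t \ge 2$). Substituting $(s,t) = (q^2, q)$, so that $st = q^3$ and $s^2 t^2 = q^6$, gives $(1+q^2)(1+q^3)(1+q^6) = \Theta(q^{11})$ points and $(1+q)(1+q^3)(1+q^6) = \Theta(q^{10})$ lines. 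Now set $n$ equal to the number of points, so $q = \Theta(n^{1/11})$: then there are $\Theta(n^{10/11})$ lines, every point lies on $t + 1 = q + 1 = \Theta(n^{1/11})$ lines, and every line contains $s + 1 = q^2 + 1 = \Theta(n^{2/11})$ points. As $q$ ranges over $\{2, 8, 32, \dots\}$ the value of $n$ runs over an infinite set, which is all that is claimed. (Since two distinct lines of a generalized polygon of order $(s,t)$ with $s \ge 2$ share at most one point while each carries $s + 1 \ge 3$ points, distinct lines have distinct point-sets, so the lines may legitimately be regarded as subsets of $[n]$.)

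The remaining bullet --- no $k$-gon for any $2 \le k \le 7$ --- is the standard dictionary between small ``polygons'' in an incidence structure and short cycles in its incidence graph. A $k$-gon $(v_0, v_1, \dots, v_k = v_0)$ with pairwise-distinct points and with $v_i, v_{i+1}$ joined by a line $\ell_i$ traces a closed walk $v_0 \ell_0 v_1 \ell_1 \cdots v_{k-1}\ell_{k-1} v_0$ of length $2k$ in the incidence graph; taking a $k$-gon with $k$ minimal forces this walk to be non-backtracking and the $\ell_i$ to be distinct, hence a genuine cycle of length $2k \le 14 < 16$, contradicting girth $16$ (the case $k = 2$ being the ``two points lie on at most one common line'' non-degeneracy condition, i.e.\ no $4$-cycle). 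Conversely a $16$-cycle exists because the girth is \emph{exactly} $16$, and it alternates between $8$ points and $8$ lines, i.e.\ it is an $8$-gon; so $7$ is the sharp threshold. This step is routine once the girth-$16$ property is in hand.

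The substantive obstacle is confined to the single ingredient I would cite rather than prove: the \emph{existence} of a finite generalized octagon with both parameters exceeding $1$. This is Tits' Ree--Tits construction, which is genuinely nonelementary --- it is built from the geometry of ${}^2F_4(q)$ / its building --- and there is essentially no way around invoking it, since by the Feit--Higman theorem finite generalized $n$-gons with $s, t > 1$ exist only for $n \in \{2,3,4,6,8\}$, and for $n = 8$ the Ree--Tits octagons are the only known examples. Accordingly, the care in a write-up goes into the bookkeeping: (i) stating the point- and line-count formulas for octagons correctly and noting the nonvanishing of the relevant denominators; (ii) being consistent about which member of the dual pair is used, so that the exponents come out as $n^{10/11}$ (lines), $n^{1/11}$ (lines through a point), $n^{2/11}$ (points on a line) rather than the transposed triple; and (iii) flagging that $n$ is restricted to the sparse-but-infinite set $\{(1+q^2)(1+q^3)(1+q^6) : q = 2^{2e+1}\}$, which suffices for the statement as worded.
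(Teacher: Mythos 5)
The paper does not actually prove this theorem---it is quoted from the literature (``see e.g.\ \cite{Wilson10}''), so the genuinely hard ingredient, the existence of a finite generalized octagon of order $(q,q^2)$ for $q=2^{2e+1}$, is a black box there exactly as it is in your write-up. Your reconstruction of the stated bullets from that black box is correct and standard: in the dual of order $(s,t)=(q^2,q)$ the counts $(1+q^2)(1+q^3)(1+q^6)=\Theta(q^{11})$ points and $(1+q)(1+q^3)(1+q^6)=\Theta(q^{10})$ lines, with $q+1$ lines per point and $q^2+1$ points per line, give precisely the exponents $n^{10/11}$, $n^{1/11}$, $n^{2/11}$, and the girth-$16$ property of the incidence graph rules out $k$-gons for $2\le k\le 7$ (your reading of the degenerate $k=2$ case as ``two points lie on at most one common line'' is the intended one), which is exactly the property the paper goes on to use in Corollary~\ref{cor:rtoincidence}.
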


The structure has some other interesting properties as well, but these are the properties relevant to our proof.
We can associate this structure to an incidence graph, which has the following properties:
\begin{corollary} [Ree-Tits Octagon Incidence Graph] \label{cor:rtoincidence}
For infinitely many values of $n$, there is a bipartite graph with $n$ nodes on one side, $\Theta(n^{10/11})$ nodes on the other side, $\Theta(n^{12/11})$ total edges, and girth $>15$.
\end{corollary}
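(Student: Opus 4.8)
The plan is to let $I$ be the bipartite incidence graph of the Ree-Tits octagon: put the $n$ points of the structure on one side of the bipartition, the $\Theta(n^{10/11})$ lines on the other side, and join a point $p$ to a line $\ell$ by an edge exactly when $p \in \ell$. Since each incidence contributes at most one edge, $I$ is simple. The vertex counts on the two sides, and the range of $n$ for which $I$ exists, are then immediate from the theorem. For the edge count I would sum degrees on the point side: each point lies on $\Theta(n^{1/11})$ lines, so $|E(I)| = n \cdot \Theta(n^{1/11}) = \Theta(n^{12/11})$; as a consistency check, summing over the line side gives $\Theta(n^{10/11}) \cdot \Theta(n^{2/11}) = \Theta(n^{12/11})$, the same bound.

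The only real content is the girth bound. Because $I$ is bipartite, every cycle has even length $2m$ for some integer $m \ge 2$. Such a cycle has the form $v_0, \ell_0, v_1, \ell_1, \dots, v_{m-1}, \ell_{m-1}, v_0$, alternating between $m$ pairwise-distinct points and $m$ pairwise-distinct lines, with $v_i$ and $v_{i+1}$ both incident to $\ell_i$ for each $i$ (and $v_{m-1}$ and $v_0$ both incident to $\ell_{m-1}$). By the definition of a $k$-gon, this exhibits an $m$-gon in the incidence structure. Since the Ree-Tits octagon has no $k$-gon for any $2 \le k \le 7$, the graph $I$ has no cycle of length $2m$ with $2 \le m \le 7$; equivalently, no cycle of length $4, 6, 8, 10, 12$, or $14$. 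As a bipartite simple graph has no cycles of length $2$ or $3$ either, every cycle of $I$ has length at least $16$, so the girth of $I$ exceeds $15$, as claimed.

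I do not expect a genuine obstacle here: the argument is essentially a dictionary between cycles in an incidence graph and gons in the incidence structure. The two points that merit a little care are (i) verifying that a cycle really yields an $m$-gon on \emph{distinct} points — which holds because the vertices of a cycle are distinct, and in particular the point-vertices are distinct — and (ii) checking that nothing needs to be ruled out below length $4$, which follows since bipartiteness forbids odd cycles and simplicity forbids length-$2$ cycles. With these checks in place, the corollary follows directly from the three stated properties of the Ree-Tits octagon.
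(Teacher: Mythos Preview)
Your proposal is correct and follows essentially the same approach as the paper: build the point-line incidence graph, read off the vertex and edge counts from the stated degree/line-size properties, and argue that a $2m$-cycle in the incidence graph yields an $m$-gon in the structure, so the absence of $k$-gons for $2 \le k \le 7$ together with bipartiteness forces girth $>15$. Your write-up is in fact a bit more careful than the paper's, spelling out the edge-count computation and the distinctness of the points in the gon, but there is no substantive difference.
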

\begin{proof}
Build the incidence graph associated to the Ree-Tits Octagon.
This is a bipartite graph with $n$ nodes on the left side, corresponding to the points $[n]$, and $\Theta(n^{10/11})$ nodes on the right side corresponding to the lines.
We put an edge from a point $x$ on the left to a line $\ell$ on the right iff $x \in \ell$.
The number of edges is immediate from the properties of the Ree-Tits Octagon, and we notice that a $2k$-cycle in the incidence graph corresponds to a $k$-gon in the original incidence structure.
Thus the graph has no cycles of length $4, 6, 8, 10, 12$, or $14$, and since it is bipartite it has no odd-length cycles either.
So its girth is $>15$.
\end{proof}

Now we can prove Theorem \ref{thm:uncondeftlb}; this is the method of \cite{BDPW18}, but applied to a different starting graph.
Starting with a graph $G = (V, E)$ from Corollary \ref{cor:rtoincidence}, copy each node $v$ on the smaller side of the bipartite graph $n^{1/11} \cdot (f/n^{1/11})^{1/2}$ times, and replace each edge $(u, v)$ with an edge $(u, v_i)$ for every copy $v_i$.
We similarly copy each node on the larger side of the bipartite graph $(f/n^{1/11})^{1/2}$ times in the same way.
(We assume for convenience that these quantities are integers, which affects our bounds only by lower-order terms.)
The number of nodes in the modified graph is
$$\Theta\left(n \cdot \left(\frac{f}{n^{1/11}}\right)^{1/2} \right) = \Theta\left(n^{21/22} \cdot f^{1/2} \right)$$
on the left side, and
$$\Theta\left(n^{10/11} \cdot n^{1/11} \cdot \left(\frac{f}{n^{1/11}}\right)^{1/2} \right) = \Theta\left(n^{21/22} \cdot f^{1/2} \right)$$
on the right side as well.
The number of edges in the modified graph is
$$|E(G)| = \Theta\left(n^{12/11} \cdot n^{1/11} \cdot \frac{f}{n^{1/11}} \right) = \Theta\left(f \cdot n^{12/11}\right).$$
Writing $N := n^{21/22} \cdot f^{1/2}$, so that there are $\Theta(N)$ total nodes, we can then reparametrize the number of edges as
$$|E(G)| = \Theta\left( f^{3/7} \cdot N^{8/7} \right).$$
Notice that this is exactly our claimed lower bound.
Next, we argue that $G$ is the only $f$-EFT spanner of itself, and thus no edges can be removed.
To see this, let us consider an edge $(u_i, v_j)$, which is a copy of an edge $(u, v)$ from the original graph.
The total number of copies of $(u, v)$ is the product of the number of copies of $u$ by the number of copies of $v$, which is
$$n^{1/11} \cdot \left( \frac{f}{n^{1/11}} \right)^{1/2} \cdot \left( \frac{f}{n^{1/11}} \right)^{1/2} = f.$$
So we may let $F$ be all other copies of $(u, v)$, besides the edge $(u_i, v_j)$ under consideration, and we have $|F| \le f$.
With these definitions of $G, F$, we have:
\begin{lemma}
In $G \setminus F$, there is no cycle $C$ of length $\le 15$ that includes the edge $(u_i, v_j)$.
\end{lemma}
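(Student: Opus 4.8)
The plan is to push a hypothetical short cycle down to the Ree--Tits incidence graph $\Gamma$ of Corollary~\ref{cor:rtoincidence} and contradict its girth. Let $\pi\colon V(G)\to V(\Gamma)$ be the projection sending every copy of a node to the node it was copied from; since the blow-up only introduces copies of already-present edges, $\pi$ carries edges of $G$ to edges of $\Gamma$ and hence walks to walks. Suppose for contradiction that $C$ is a simple cycle of $G\setminus F$ of length at most $15$ through $(u_i,v_j)$. Because $G$ is simple and bipartite, $|C|$ is even and at least $4$, so $4\le |C|\le 14$, which is strictly less than the girth of $\Gamma$ (which is at least $16$). First I would delete $(u_i,v_j)$ from $C$ to obtain a $u_i$--$v_j$ path $P$ of length $|C|-1$; since $P$ lies in $G\setminus F$ and also avoids $(u_i,v_j)$ itself, and since $F$ is exactly the set of all remaining copies of $(u,v)$, the path $P$ uses no edge of $G$ that projects onto the edge $(u,v)$ of $\Gamma$. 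Then I would set $W=\pi(P)$, a $u$--$v$ walk in $\Gamma$ of length $|C|-1$ that traverses the edge $(u,v)$ zero times, and let $W'$ be the closed walk at $u$ obtained by prepending the edge $(u,v)$ to $W$. Thus $W'$ is a closed walk in $\Gamma$ of length $|C|\le 14$ that traverses $(u,v)$ exactly once.

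The core of the argument is a backtrack-removal reduction. I would invoke the standard fact that any closed walk whose length is below the girth must contain a \emph{backtrack}, i.e.\ two cyclically consecutive edges that coincide (this follows from a short minimal-counterexample argument: a backtrack-free closed walk that is too short to wrap around any cycle must itself contain a cycle of length below the girth). Applied to $W'$: since the edge $(u,v)$ occurs in $W'$ exactly once and all other edges of $W'$ avoid $(u,v)$, no backtrack of $W'$ can involve that single occurrence of $(u,v)$, so every backtrack lies strictly inside the $W$-portion and consists of two copies of some edge other than $(u,v)$. Splicing such a backtrack out (delete the two repeated edges and the node between them) produces a strictly shorter closed walk in $\Gamma$ that still traverses $(u,v)$ exactly once, still has all other edges avoiding $(u,v)$, and still has length below the girth. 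Iterating, the length drops by $2$ at each step while staying even and positive, so eventually we reach a closed walk of length $2$ at $u$ whose first edge is $(u,v)$ and whose second edge runs from $v$ back to $u$ without being $(u,v)$ --- impossible, since $\Gamma$ is simple and so has only one edge between $u$ and $v$. This contradiction completes the proof.

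The part I expect to require the most care is the bookkeeping in the splicing loop: one must verify that each splice simultaneously preserves (i) being a closed walk in $\Gamma$, (ii) length strictly below the girth, and (iii) the edge $(u,v)$ appearing exactly once with every other edge avoiding it, and that the process terminates at length exactly $2$ rather than overshooting. All three follow cleanly because each splice deletes precisely two non-$(u,v)$ edges and reduces the even length by $2$. The only external ingredient is the folklore ``closed walk shorter than the girth has a backtrack'' lemma, which can either be given the two-line minimal-counterexample proof above or cited as standard; everything else --- that $G$ is simple and bipartite, that $\pi$ is a walk-preserving projection, that $F$ is exactly the remaining copies of $(u,v)$, and the parity/length bounds on $C$ --- is immediate from the construction.
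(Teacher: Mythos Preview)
Your proposal is correct and follows essentially the same approach as the paper: both project the hypothetical short cycle to the Ree--Tits incidence graph via the copy-to-original homomorphism, observe that the resulting closed walk traverses the edge $(u,v)$ exactly once (because $F$ killed all other copies), and then reduce this closed walk to contradict the girth bound. The only difference is cosmetic, in the reduction step: the paper repeatedly passes to the $x\leadsto x$ subwalk containing the unique copy of $(u,v)$ whenever a node $x$ repeats, terminating at a simple cycle of length $\le 15$ through $(u,v)$; you instead iteratively splice out backtracks (which can never touch the single occurrence of $(u,v)$) until the walk has length~$2$, forcing a parallel edge to $(u,v)$ in a simple graph. Both reductions preserve the ``$(u,v)$ appears exactly once'' invariant and are equally valid; the paper's version is marginally more direct since it lands on a short cycle in one shot rather than iterating down to length~$2$.
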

\begin{proof}
Suppose for contradiction that such a cycle $C$ exists.
Let $\phi$ be the homomorphism from $G \setminus F$ to the original Ree-Tits octagon incidence graph $R$, which maps each node $x_i$ back to the original node $x$ from which it was copied.
Let $\phi(C)$ be the nodewise image of $C$, which is a closed walk in $R$.
Notice that $\phi(C)$ only contains the edge $(u, v)$ once, since we removed all copies of $(u, v)$ in $G$ except for $(u_i, v_j)$.

We then modify $\phi(C)$ as follows: while there exists a node $x$ that occurs at least twice in $C$, replace $\phi(C)$ by its $x \leadsto x$ subwalk that contains the single copy of $(u, v)$.
Once this process halts, we have a simple circularly-ordered walk that contains the edge $(u, v)$ only once, which must therefore be a cycle in $R$ of length $\ge 3$ and $\le 15$.
This contradicts that $R$ has girth $>15$, completing the proof.
\end{proof}

It follows from this lemma that the shortest $u_i \leadsto v_j$ path in $G \setminus (F \cup \{(u_i, v_j)\})$ has length $>14$.
Thus, we must keep $(u_i, v_j)$ in any $f$-EFT $14$-spanner.
Since $(u_i, v_j)$ was an arbitrary edge, this means $G$ is the only $f$-EFT $14$-spanner of itself, completing the proof.

\section{Missing Proofs from Section \ref{sec:greedy-blocking}} \label{app:greedy-blocking}

\greedy*
\begin{proof}
The fact that $H$ is an $f$-EFT $(2k-1)$-spanner is straightforward from the for-loop; see~\cite{BDPW18} for a formal proof.

We now prove that the output spanner $H$ has a small strong $2k$-blocking set.
For each edge $e=(u,v)\in E(H)$, let $F_e$ denote the set of edges that forced the algorithm to add $e$; i.e., the set with $|F_e| \leq f$ such that $d_{H\setminus F_e}(u,v)>(2k-1)\cdot w(e)$ when $e$ is added to $H$. Let
$$B:=\left\{(x,e) \mid e\in E(H),x\in F_e \right\}.$$
Since $|F_e|\leq f$ for all $e$, we have that $|B|\leq f|E(H)|$.
We can now show that $B$ is a strong $2k$-blocking set for $H$.
Let $C$ be any cycle on $\leq 2k$ edges in the final graph $H$ and let $e=(u,v)$ be the last edge in $C$ considered by the greedy algorithm.
By construction there is a $u \leadsto v$ path (through $C$) of total weight $\leq (2k-1)\cdot w(e)$ when $e$ is added to $H$, and so some edge $x\in C\setminus \{(u,v)\}$ must be included in $F_e$. Thus $(x,e)\in B$.  Moreover, since $e$ is the last edge in $C$ added to $H$ by the algorithm it is the heaviest edge in $C$ (under tiebreaking), as the algorithm considers edges in non-decreasing weight order.  Hence $B$ is a strong $2k$-blocking set.
\end{proof}

\polygreedy*
\begin{proof}
The fact that $H$ is an $f$-EFT $(2k-1)$-spanner is again straightforward, and was proved formally in~\cite{DR20}.

We now prove that the output spanner $H$ has a small strong $2k$-blocking set.
For each edge $e=(u,v)\in E(H)$, let $H_e\subseteq H$ denote the subgraph of $H$ containing only the edges considered before $e$ by the algorithm.
Since $e$ was added to $H$ by Algorithm~\ref{ALG:polytime}, $\mathcal A$ must have returned YES. Thus there is some set $F_e\subseteq E\setminus (u,v)$ with $|F_e|\leq f(2k-1)$ such that $\hat d_{H_e\setminus F_e}(u,v)>2k-1$ (where we use $\hat d$ to denote the \emph{unweighted} distance).
Let
$$B:=\left\{(x,e) \mid e\in E(H),x\in F_e\right\}.$$
Since $|F_e|\leq f(2k-1)$ we have that $|B|\leq f(2k-1)|E(H)|$. We can now show that $B$ is a strong $2k$-blocking set for $H$.
Let $C$ be any cycle in $H$ on $\leq 2k$ edges, and let $e=(u,v)$ be the last edge in this cycle added to $H$.
By construction there is a $u \leadsto v$ path (through $C$) with at most $2k-1$ hops when $e$ is added to $H$, and so some edge $x\in C\setminus \{(u,v)\}$ must be included in $F_e$. Thus $(x,e)\in B$.
Moreover, since the algorithm considers the edges in nondecreasing weight order, we know that $e$ is the heaviest edge in $C$ (under tiebreaking).
Thus $B$ is a strong $2k$-blocking set for $H$.
\end{proof}

\end{document}